\theoremstyle{plain}\newtheorem{theorem}{Theorem}
\theoremstyle{definition}\newtheorem{remark}{Remark}
\theoremstyle{definition}
\theoremstyle{plain}
\theoremstyle{plain}\newtheorem{corollary}{Corollary}
\theoremstyle{plain}\newtheorem{lemma}{Lemma}
\theoremstyle{plain}
\theoremstyle{plain}\newtheorem{proposition}{Proposition}
\begin{document}

\preprint{APS/123-QED}
%\linenumbers
\title{Distributed quantum approximate counting algorithm}
%\thanks{A footnote to the article title}%

\author{Huaijing Huang$^{1,2}$,  Daowen Qiu$^{1,2}$ }%
\email{ issqdw@mail.sysu.edu.cn (Corresponding author's address)}
\affiliation{
$^1$  School of Computer Science and Engineering, Sun Yat-sen University, Guangzhou 510006, China \\
$^2$ The Guangdong Key Laboratory of Information Security Technology, Sun Yat-sen University, 510006, China
}

\date{\today}

\begin{abstract}
  %Counting problem is a fundamental type of important problems in computer science. 
  In this article, we propose a distributed quantum algorithm for solving counting problem using  Grover operator and a classical post-processing procedure.  We apply the proposed algorithm to estimate inner products and Hamming distances. 
Simulations are conducted on the Qisikit platform, further demonstrating the effectiveness of our algorithm and its suitability for the NISQ era.   Compared to  existing counting algorithms, the proposed algorithm  has advantages in terms of the number of qubits, circuit depth, and the number of quantum gates. 

\newtheorem{defi}{Definition}

%\begin{description}
%\item[Usage]
%Secondary publications and information retrieval purposes.
%\item[PACS numbers]
%May be entered using the \verb+\pacs{#1}+ command.
%\item[Structure]
%You may use the \texttt{description} environment to structure your abstract;
%use the optional argument of the \verb+\item+ command to give the category of each item.
%\end{description}
\end{abstract}

\pacs{Valid PACS appear here}% PACS, the Physics and Astronomy
                             % Classification Scheme.
%\keywords{Suggested keywords}%Use showkeys class option if keyword
                              %display desired
\maketitle

%\tableofcontents
%\linenumbers
\section{Introduction}
The counting problem is a fundamental issue in computer science. Given $N\in \mathbb{Z}^+$ elements, among which there are $m>0$ marked elements, the counting problem involves estimating the number of marked elements. The quantum counting algorithm was originally proposed by Brassard et al. in \cite{brassard1998quantum,brassard2000quantum}, and its query complexity is $O(\sqrt{N})$. An approximate quantum counting algorithm was also proposed in \cite{brassard2000quantum}, with a query complexity of $O(\frac{1}{\epsilon}\sqrt{\frac{N}{m}})$, where $0<\epsilon<1$ represents the error. Similar to Grover's algorithm\cite{Grover1996}, quantum counting algorithm achieves a quadratic speedup in query complexity compared to classical algorithms.

Quantum counting algorithms are primarily derived from estimating the amplitudes of marked states. Therefore, quantum amplitude estimation algorithms are closely related to quantum counting algorithms. The quantum amplitude estimation algorithm in \cite{brassard2000quantum} actually combines quantum phase estimation\cite{kitaev1995quantum} and Grover's algorithm. Recently, many quantum amplitude estimation algorithms have emerged that do not rely on quantum phase estimation that involves Fourier transforms; they only require the use of Grover operator and do not need to rely on quantum Fourier transforms,  making them more suitable for NISQ (Noisy intermediate-scale quantum)\cite{Preskill2018} devices. These quantum amplitude estimation algorithms include those estimated by maximum likelihood estimation\cite{suzuki2020amplitude}, interval estimation methods\cite{grinko2021iterative,Fukuzawa2023,zhao2022adaptive,manzano2023real}, and variational algorithms\cite{plekhanov2022variational}, among others. 

Tan et al. \cite{tan2017generalized} generalized the algorithm from Ref.\cite{brassard2000quantum} and proposed a generalized quantum counting algorithm with an initial state that is a non-uniform superposition. The quantum counting algorithm has recently been further advanced, evolving from its initial reliance on the phase estimation algorithm that involves Fourier transforms to now requiring only Grover operator.  Wie \cite{wie2019simpler} proposed a simple quantum counting algorithm based on  single-qubit controlled Grover operator. 
A quantum approximate counting algorithm was proposed by Aaronson et al. \cite{aaronson2020quantum} using  Grover operator, accompanied by  rigorous analyses.
 Venkateswaran et al.\cite{venkateswaran2020quantum} introduced a non-adaptive Grover operator-iteration quantum approximate counting algorithm. Le Gall et al.\cite{gall2022quantum} extended  the quantum approximate counting technique to estimate the number of marked states in a Markov chain. A universal error-correction method suitable for designing distributed quantum algorithms was proposed by Qiu et al.\cite{qiu2025universal}.
  
  Distributed quantum algorithms, as a cutting-edge area of quantum computing, are undoubtedly important in the NISQ era. Distributed Grover's algorithm\cite{Qiu2022}, distributed Simon's algorithm\cite{TAN_quantum_2022}, and  distributed Shor's algorithm\cite{Xiao2023} have all been proposed. Currently, distributed quantum counting algorithm is still lacking. Therefore, considering the issue of insufficient qubits in existing quantum computers,  our purpose is to propose a distributed quantum counting algorithm in this paper and apply it to the computation of inner products and Hamming distances.

The paper is structured in the following manner. In Section \ref{2}, modified iterative quantum amplitude estimation (MIQAE) algorithm  will be reviewed briefly. In Section \ref{3},  we design a  distributed quantum counting algorithm to approximate the number of marked elements in a given set.   We also prove the correctness of the algorithm,  analyze the query complexity,  and compare it with existing work. The two applications of the algorithm are presented in Section \ref{4}. By using the Qisikt platform, the effectiveness of the algorithm is verified in  Section \ref{experiment1}.  Conclusions and  prospects are presented in Section \ref{6}.

\section{Preliminaries}\label{2}
Iterative quantum amplitude estimation was initially proposed by Grinko et al. in 2021\cite{grinko2021iterative}, and later improved by Fukuzawa et al.\cite{Fukuzawa2023}. In this section,  we  specifically introduce  modified iterative quantum amplitude estimation (MIQAE) algorithm \cite{Fukuzawa2023}.
 \subsection{The modified iterative quantum amplitude estimation  algorithm}
 The quantum amplitude estimation problem is as follows: given a unitary operator $\mathcal{A}$ acting on $n+1$ qubits, such that \begin{equation}\label{def-A}
    \mathcal{A}\ket{0}_n \ket{0} = \cos\theta_a\ket{\psi_0}_n \ket{0} + \sin\theta_a\ket{\psi_1}_n \ket{1},
\end{equation}
where $\theta_a \in [0,\pi/2]$,  $\ket{\psi_0}$ and  $\ket{\psi_1}$ are two normalized states. The subscript \( n \) denotes the number of qubits. The goal is to estimate $\sin^2\theta_a$ by making as few calls to $\mathcal{A}$ as possible. %Transform the estimation of $\sin\theta_a$ into the estimation of $\sin^2\theta_a$, by estimating $\sin^2\theta_a$, $\sin\theta_a$can be determined. 
Let $a=\sin^2\theta_a$. We define  Grover operator as follows: \begin{equation}Q = -\mathcal{A}\mathcal{U}_0\mathcal{A}^\dagger \mathcal{U}_{\psi_1},\end{equation} where $\mathcal{U}_0 = \mathbb{I}_{n+1} - 2 \ket{0}_{n+1}\bra{0}_{n+1}$ and $\mathcal{U}_{\psi_1} = \mathbb{I}_{n+1} - 2 \ket{\psi_1}_{n}\bra{\psi_1}_{n}\otimes \ket{1}\bra{1}$. The iterative quantum amplitude estimation algorithm does not require controlled unitary operators $Q$. Instead, it directly applies the $Q$ operator \(k\) times to $\mathcal{A}\ket{0}_n \ket{0}$, resulting in $Q^k\mathcal{A}\ket{0}_n \ket{0}$,  and then repeatedly measures the last qubit. The amplitude estimate is determined by the probability of measuring $\ket{1}$, that is \begin{equation}
    \mathbb{P}[\ket{1}] = \sin^2((2k+1)\theta_a). 
\end{equation}

The MIQAE algorithm primarily involves specifying a confidence interval $\theta_a \in [\theta_l, \theta_u]$ and utilizing Algorithm  \ref{alg:find-next-k} to find distinct values of $K_i$, where $K_i = 2 k_i + 1$ is defined,  such that the confidence interval can continuously shrink to the desired precision within a certain probability. Ultimately, this process estimates the amplitude $\sin^2\theta_a$. The core idea of Algorithm \ref{alg:find-next-k} is to find the largest odd integer $K$ in the range from $K_i$ to $\frac{\pi}{2(\theta_u-\theta_l)}$ such that both $K\theta_l$ and $K\theta_u$ lie in the same quadrant, using the current values of $k_i$, $\theta_l$, and $\theta_u$. The  $K_i$ identified by Algorithm \ref{alg:find-next-k} must satisfy the following condition \begin{equation}\label{11}
    \left\lfloor \frac{K_i \theta_l}{\pi/2} \right\rfloor = \left\lceil \frac{K_i \theta_u}{\pi/2} \right\rceil - 1.
\end{equation}
The quadrant count is defined as $R_i = \lfloor K_i \theta_l / (\pi/2) \rfloor$, which starts from 0. Algorithm\ref{alg:find-next-k} must find a $K_i$ such that the enlarged interval $[K_i\theta_l, K_i\theta_u]$ falls within one quadrant; otherwise, it returns the previous $K_{i-1}$ value.
According to Eq. (\ref{11}), $K_i \theta_l$ and $K_i \theta_u$ can be expressed as follows:\begin{align*}
    K_i \theta_l & = R_i \cdot (\pi/2) + \gamma^{min},& \mbox{for}~\gamma^{min} \in [0, \pi/2), \\
    K_i \theta_u & = R_i \cdot (\pi/2) + \gamma^{max},& \mbox{for}~\gamma^{max} \in (0, \pi/2].
\end{align*}
With repeated measurements, we can obtain the confidence interval $[a^{min}, a^{max}]$ for $\sin^2 (K_i \theta_a)$. Based on this interval and Table \ref{table:theta-i-ci}, we can determine the confidence interval for $\gamma_a$, which is $[\gamma^{min}, \gamma^{max}]$. We utilize the Chebyshev-Hoeffding inequality to estimate the probability that $\sin^2 (K_i \theta_a)$ falls outside the confidence interval $[a^{min}, a^{max}]$, i.e., $\sin^2 (K_i \theta_a) \not\in [a^{min}, a^{max}]$.
\begin{table}[!t]
\centering
\begin{tabular}{|c|c|c|}
\hline
Quadrant ($R$)& $\gamma_i^{min}$                   & $\gamma_i^{max}$                   \\ \hline
Even        & $\arcsin \sqrt{a_i^{min}}$         & $\arcsin \sqrt{a_i^{max}}$         \\ \hline
Odd       & $-\arcsin \sqrt{a_i^{max}} + \frac{\pi}{2}$  & $-\arcsin \sqrt{a_i^{min}} + \frac{\pi}{2}$  \\ \hline
\end{tabular}
\caption{\em A table describing the conversion from $a_i^{min}$, $a_i^{max}$ to $\gamma_i^{min}$, $\gamma_i^{max}$ respectively. $\sin^{-1}$ denotes the arcsin function. The top right and bottom left quadrants correspond to an even quadrant count. The top left and bottom right quadrants correspond to an odd quadrant count.}
\label{table:theta-i-ci}
\end{table}

\begin{algorithm}[H]
\caption{Modified IQAE}
\label{alg:modified-iqae}
\SetAlgoLined
\KwIn{$\epsilon > 0$, $\alpha > 0$, $N_{\text{shots}} \in \mathbb{N}$, Unitary $\mathcal{A}$}
$i = 0$, $k_i = 0$\;
$[\theta_l, \theta_u] = [0, \tfrac{\pi}{2}]$\;
$K_{max}= \tfrac{\pi}{4\epsilon}$\;
\While{$\theta_u - \theta_l > 2\epsilon$}{
  $N = 0$, $i = i + 1$, $k_i = k_{i-1}$\;
  $K_i = 2k_i + 1$\;
  $\alpha_i = \tfrac{2\alpha}{3} \cdot \tfrac{K_i}{K_{max}}$\;
  $N_i^{max} = \tfrac{2}{\sin^2(\tfrac{\pi}{21})\sin^2(\tfrac{8\pi}{21})}\ln\!\left(\tfrac{2}{\alpha_i}\right)$\;
  $R_i = \lfloor \tfrac{2K_i\theta_l}{\pi} \rfloor$ \tcp{The number of quadrants passed\;}
  \While{$k_i = k_{i-1}$}{
    Prepare circuit $Q^{k_i}\mathcal{A}\ket{0}_n\ket{0}$\;
    Measure $\min\{N_{\text{shots}}, N_{i}^{max}-N\}$ times\;
    Combine results to approximate $\hat{a}_i \approx \mathbb{P}[\ket{1}]$\;
    $N = N + \min\{N_{\text{shots}}, N_{i}^{max}-N\}$\;
    $\epsilon_{a_i}= \sqrt{\tfrac{1}{2N} \ln\!\left(\tfrac{2}{\alpha_i}\right)}$\;
    $a_i^{\max} = \min(1, \hat{a}_i + \epsilon_{a_i})$\;
    $a_i^{\min} = \max(0, \hat{a}_i - \epsilon_{a_i})$\;
    Calculate $[\gamma_i^{\min}, \gamma_i^{\max}]$ from $[a_i^{\min}, a_i^{\max}]$ and $R_i$\;
    $\theta_l = \tfrac{R_i \cdot \pi/2 + \gamma_i^{\min}}{K_i}$\;
    $\theta_u = \tfrac{R_i \cdot \pi/2 + \gamma_i^{\max}}{K_i}$\;
    \If{$\theta_u-\theta_l < 2\epsilon$}{
      \textbf{break}\;
    }
    $k_{i} = \texttt{FindNextK}(k_{i}, \theta_l, \theta_u)$\;
  }
}
$[a_l, a_u] = [\sin^2(\theta_l), \sin^2(\theta_u)]$\;
\Return{$[a_l, a_u]$}\;
\end{algorithm}

\begin{algorithm}[H]
\caption{FindNextK}
\label{alg:find-next-k}
\SetAlgoLined
\KwIn{$k_i$, $\theta_l$, $\theta_u$}
 $K_{i} = 2k_{i} + 1$\;
 $K = \left\lfloor \frac{\pi}{2(\theta_u - \theta_l)} \right\rfloor$ \;
\If {$K$ is even}
 {
 $K = K - 1$\;}
\While{$K \geq 3K_{i}$}{
    \If{$\left\lfloor \tfrac{2K\theta_l}{\pi} \right\rfloor 
        = \left\lceil \tfrac{2K\theta_u}{\pi} \right\rceil - 1$}{
        \Return{$(K - 1)/2$}\;
    }
    $K = K - 2$\;
}
 \Return $k_{i}$\;
\end{algorithm}
The theorem below provides the precision and query complexity of the MIQAE algorithm, where the query complexity refers to the number of times the operator $\mathcal{A}$ is invoked.
\begin{theorem}\label{thm:algorithm}
Given a confidence level $1-\alpha \in (0, 1)$, a target accuracy $\epsilon > 0$, and an $(n + 1)$-qubit unitary $\mathcal{A}$ satisfying 

\[\mathcal{A}\ket{0}_n\ket{0} = \sqrt{a}\ket{\psi_0}\ket{0} + \sqrt{1-a}\ket{\psi_1}\ket{1},\]
where $\ket{\psi_0}$ and $\ket{\psi_1}$ are  $n$-qubit states and $a \in [0, 1]$, MIQAE algorithm outputs a confidence interval for $a$ that satisfies
\[\mathbb{P}\left[a \not \in [a_l, a_u]\right] \leq \alpha,\]
where $a_u - a_l < 2\epsilon$, leading to an estimate $\hat{a}=\frac{a_l+a_u}{2}$ for $a$ such that $|a - \hat{a}| < \epsilon$ with a confidence of $1 - \alpha$, using $O\left(\frac{1}{\epsilon}\log \frac{1}{\alpha}\right)$ applications of $\mathcal{A}$. 
\end{theorem}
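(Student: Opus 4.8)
The plan is to prove three assertions in turn: (i) $a\in[a_l,a_u]$ with probability at least $1-\alpha$; (ii) $a_u-a_l<2\epsilon$, so that the midpoint $\hat a=(a_l+a_u)/2$ obeys $|a-\hat a|<\epsilon$; and (iii) the algorithm halts after only $O(\epsilon^{-1}\log\alpha^{-1})$ applications of $\mathcal A$. The common backbone is a structural fact about \texttt{FindNextK}: whenever it changes $k_i$, the returned value satisfies $K_{i+1}=2k_{i+1}+1\ge 3K_i$ (this is the loop guard $K\ge 3K_i$ in Algorithm~\ref{alg:find-next-k}), while \texttt{FindNextK} is only invoked when $\theta_u-\theta_l\ge 2\epsilon$, so $K_{i+1}\le \pi/(2(\theta_u-\theta_l))\le\pi/(4\epsilon)=K_{max}$. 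Hence the distinct scales $K_1=1<K_2<\dots<K_T$ used over the run satisfy $K_i\ge 3^{i-1}$ and $K_i\le 3^{-(T-i)}K_T\le K_{max}$; consequently $T\le 1+\log_3 K_{max}=O(\log(1/\epsilon))$ and $\sum_{i=1}^{T}K_i<\tfrac32 K_{max}$.

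For (i), fix a round $i$ and assume inductively that $\theta_a\in[\theta_l,\theta_u]$ at its start (true initially since $[\theta_l,\theta_u]=[0,\pi/2]$). Measuring the last qubit of $Q^{k_i}\mathcal A\ket{0}_n\ket{0}$ returns $\ket{1}$ with probability $\mathbb P[\ket{1}]=\sin^2(K_i\theta_a)$, so the $N$ outcomes are i.i.d.\ Bernoulli with this mean and Hoeffding's inequality yields $\mathbb P\big[|\hat a_i-\sin^2(K_i\theta_a)|>\epsilon_{a_i}\big]\le 2e^{-2N\epsilon_{a_i}^2}=\alpha_i$ for $\epsilon_{a_i}=\sqrt{\ln(2/\alpha_i)/(2N)}$; on the complementary event $\sin^2(K_i\theta_a)\in[a_i^{\min},a_i^{\max}]$. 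Since \texttt{FindNextK} guaranteed via Eq.~(\ref{11}) that $[K_i\theta_l,K_i\theta_u]$, hence $K_i\theta_a$, lies in a single quadrant $R_i$, on which $t\mapsto\sin^2 t$ is monotone, inverting it through the parity-dependent formulas of Table~\ref{table:theta-i-ci} turns that containment into $\theta_a\in[\theta_l,\theta_u]$ for the new endpoints, closing the induction. A union bound over the at most $T$ rounds bounds the total failure probability by $\sum_{i=1}^{T}\alpha_i=\tfrac{2\alpha}{3K_{max}}\sum_i K_i<\tfrac{2\alpha}{3K_{max}}\cdot\tfrac32 K_{max}=\alpha$. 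Assertion (ii) follows on this good event: since $\tfrac{d}{d\theta}\sin^2\theta=\sin2\theta$ has modulus $\le 1$, $a_u-a_l=\sin^2\theta_u-\sin^2\theta_l\le\theta_u-\theta_l<2\epsilon$, whence $|a-\hat a|\le (a_u-a_l)/2<\epsilon$.

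For (iii) I first argue that the inner loop of a round is finite: at $N=N_i^{\max}$ the half-width collapses to the fixed constant $\epsilon_{a_i}=\tfrac12\sin(\tfrac\pi{21})\sin(\tfrac{8\pi}{21})$, and $N_i^{\max}$ is calibrated so that the corresponding interval $[\theta_l,\theta_u]$ is narrow enough to force either $\theta_u-\theta_l<2\epsilon$ (the algorithm halts) or the existence of an odd $K\in[3K_i,\pi/(2(\theta_u-\theta_l))]$ with $K\theta_l,K\theta_u$ in one quadrant (so $k_i$ advances). Thus there are $T=O(\log(1/\epsilon))$ rounds, round $i$ makes at most $N_i^{\max}=C\ln(2/\alpha_i)$ measurements with $C=2/(\sin^2(\tfrac\pi{21})\sin^2(\tfrac{8\pi}{21}))$, and each measurement costs $K_i=2k_i+1$ applications of $\mathcal A$. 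Writing $\ln(2/\alpha_i)=\ln(3/\alpha)+\ln(K_{max}/K_i)$,
\[
\sum_{i=1}^{T}K_iN_i^{\max}\;\le\;C\ln\!\tfrac3\alpha\sum_{i=1}^{T}K_i\;+\;C\sum_{i=1}^{T}K_i\ln\!\tfrac{K_{max}}{K_i}.
\]
The first term is $O(K_{max}\log(1/\alpha))=O(\epsilon^{-1}\log\alpha^{-1})$. For the second, $x\mapsto x\ln(K_{max}/x)$ is nondecreasing on $(0,K_{max}/e]$, so substituting $K_i\le 3^{-(T-i)}K_{max}$ bounds it by $K_{max}\ln3\sum_{\ell\ge 0}\ell\,3^{-\ell}+O(K_{max})=O(K_{max})=O(\epsilon^{-1})$. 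Adding the two gives the claimed $O(\epsilon^{-1}\log\alpha^{-1})$.

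The step I expect to be the genuine obstacle is the calibration behind (iii): verifying that the specific cap $N_i^{\max}$, with its constants $\pi/21$ and $8\pi/21$, is \emph{exactly} large enough to guarantee that \texttt{FindNextK} advances $K$ by a factor at least $3$. This needs a careful trigonometric estimate---bounding the width of the $\gamma$-interval produced by Table~\ref{table:theta-i-ci} (the arcsin conversion being most sensitive when $\sin^2(K_i\theta_a)$ is near $1$, i.e.\ near a quadrant boundary), turning that into a bound on $\theta_u-\theta_l$, and exhibiting an odd integer with the same-quadrant property inside $[3K_i,\pi/(2(\theta_u-\theta_l))]$. A secondary nuisance is that within a round the confidence interval is recomputed at several values of $N$ against a single $\alpha_i$; one handles this either by taking one batch of $N_i^{\max}$ shots per round for the worst-case bound, or by the tighter optional-stopping accounting of Fukuzawa et al.~\cite{Fukuzawa2023}.
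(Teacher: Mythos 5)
The paper does not actually prove Theorem \ref{thm:algorithm}; it is quoted from Ref.~\cite{Fukuzawa2023} as background, so there is no in-paper proof to compare against. Your reconstruction is correct and follows the same architecture the paper later uses for its own Theorem \ref{r1}: Chernoff--Hoeffding plus a union bound in which $\sum_i\alpha_i$ telescopes because the $K_i$ grow geometrically (the paper's Lemma \ref{l1} plays the role of your bound $K_i\le 3^{-(T-i)}K_{max}$), and a query-complexity sum controlled by the monotonicity of $x\mapsto x\ln(cK_{max}/x)$ over that geometric sequence. The step you flag as the genuine obstacle --- that $N_i^{max}$ with the constants $\sin(\tfrac{\pi}{21})$ and $\sin(\tfrac{8\pi}{21})$ forces \texttt{FindNextK} to return an odd $K\ge 3K_i$ --- is exactly the step the paper itself does not reprove: its Proposition \ref{proposition2} defers the key inequality $\epsilon_\theta\le f_{\max}(\theta)$ to Appendix A of Ref.~\cite{Fukuzawa2023}. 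So your proposal is as complete as the paper's own treatment, modulo that cited trigonometric lemma and the bookkeeping for the repeated interval recomputations within a round, both of which you correctly identify as the remaining gaps.
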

\section{The  distributed quantum approximate counting algorithm}\label{3}
In this section, we design a distributed  quantum counting algorithm. We first describe  the counting problem.

For any finite set \( X = \{0, 1, \cdots, N-1\} \), which contains several marked elements, let the set of marked elements be denoted as  \( S \). The counting problem is to determine the number of marked elements, i.e., $|S|$. 

Without loss of generality,  assume $N = 2^n$ and $|S|=t$.
Since for any bit string with arbitrary length $N$, we can
construct a new bit string with $N' = 2^{\lceil log N\rceil}$ by appending zeros without changing the number of marked elements.  We encode the elements of  $X$ into uniform superposition state $\frac{1}{ \sqrt 2^{n}}\sum_{x=0}^{2^{n}-1}\ket{x}$. Let 
\begin{equation}\label{def-A}
   \left( \mathcal{H}^{\otimes n}\otimes \mathcal{I}\right)\ket{0}_{n+1} = \cos\theta\sum_{x\notin S}\frac{\ket{x}_{n}}{\sqrt {2^{n}-t}} \ket{0} + \sin\theta\sum_{x\in S}\frac{\ket{x}_{n}}{\sqrt {t}}\ket{0} ,
\end{equation}
where  $\sin\theta=\sqrt {\frac{t}{2^{n}}}$. The indicator function of the set \( S \) is constructed as follows: $\forall x\in\{0,1\}^n$,
\begin{align}\chi_S(x) =\left\{\begin{aligned}
1 \quad& x\in S,\\
0 \quad& x\notin S.
\end{aligned}\right.\end{align}
There exists an oracle that can identify whether $ x $ is a marked element, that is, 
\begin{equation}\label{01}
U_{\chi_S}\ket{x}_{n}\ket{0}=\ket{x}_{n}\ket{\chi_S(x)}.
\end{equation}
Let $\mathcal{A}=U_{\chi_S}\left(\mathcal{H}^{\otimes n}\otimes \mathcal{I}\right)$. We have \begin{equation}\label{1}
\mathcal{A}\ket{0}_{n}\ket{0}=\cos\theta\sum_{x\notin S}\frac{\ket{x}_{n}}{\sqrt {2^{n}-t}} \ket{0} + \sin\theta\sum_{x\in S}\frac{\ket{x}_{n}}{\sqrt {t}}\ket{1}.
\end{equation}
The task is to estimate $t$ with as few calls to  $\mathcal{A}$
as possible.  By estimating the amplitude $\sin^2\theta$, we can successfully estimate the number of marked elements.

To design a distributed quantum counting algorithm, we assume there exists an
oracle $U_{\chi_{S_j}}$ such that\begin{equation}U_{\chi_{S_j}}\ket{i'}_{n-k}\ket{0}=\ket{i'}_{n-k}\ket{\chi_{S_j}(i')},
\end{equation}
where $ S_j =\{i'|ji'\in S \}$, $j\in\{0,1\}^k, 1\le k<n$, and 
$\chi_{S_j}(i')$ is the indicator function of $S_j$. 
\iffalse
we first construct the set , where . The indicator function of $ S_j$ is as follows: $\forall i'\in\{0,1\}^{n-k}$,
\begin{align}\chi_{S_j}(i') =\left\{\begin{aligned}
1 \quad& i'\in S_j,\\
0 \quad& i'\notin S_j.
\end{aligned}\right.\end{align}
According to the indicator function , we decompose the corresponding oracle  $O_{\chi_S}$ into $O_{\chi_{S_j}}$, $j\in\{0,1\}^k$. $O_{\chi_{S_j}}$ is defined as \fi
Let $|S_j|=t_j$. We define the amplitude amplification operator as \begin{equation}
Q_j = -\mathcal{A}_j\left(\mathcal{I}_{n-k+2}-2 \ket{0}_{n-k+2}\bra{0}_{n-k+2} \right)\mathcal{A}^{\dagger}_j \left(\mathcal{I}_{n-k+2}-2\left(\mathcal{I}_{n-k} \otimes\ket{11}\bra{11} \right)\right), ~j\in\{0,1\}^k,
\end{equation}
where \begin{equation}
\mathcal{A}_j=\left(\mathcal{I}_{n-k+1}\otimes \mathcal{R}_{r_i}\right)\left(U_{\chi_{S_j}}\otimes \mathcal{I}\right)\left(\mathcal{H}^{\otimes {n-k}}\otimes \mathcal{I}_2\right),
\end{equation}
\begin{equation}
\mathcal{R}_{r_i}\ket{0}=\sqrt{r_i}\ket{1}+\sqrt{1-r_i}\ket{0}, \quad0<r_i\le1.
\end{equation}

We denote $$U_{11} = \mathcal{I}_{n-k+2}-2\left(\mathcal{I}_{n-k} \otimes\ket{11}\bra{11} \right),~U_0 = \mathbb{I}_{n-k+2} - 2 \ket{0}_{n-k+2}\bra{0}_{n-k+2}. $$ 
Then $Q_j =-\mathcal{A}_j U_0\mathcal{A}^{\dagger}_j U_{11}$. The circuit for $Q_j$ is shown in Fig. \ref{DIQC algorithm}.

 The initial state  of the quantum algorithm is as follows: 
\begin{equation*}\label{def-Aj}
\begin{aligned}
    \mathcal{A}_j\ket{0}_{n-k+2}  = &\sqrt {\frac{r_it_j}{2^{n-k}}}\sum_{i'\in S_j}\frac{\ket{i'}_{n-k}}{\sqrt {t_j}}\ket{11}  + \sqrt {\frac{t_j\left(1-r_i\right)}{2^{n-k}}}\sum_{i'\in S_j}\frac{\ket{i'}_{n-k}}{\sqrt {t_j}} \ket{10}\\
&+\sqrt {\frac{r_i(2^{n-k}-t_j)}{2^{n-k}}}\sum_{i'\notin S_j}\frac{\ket{i'}_{n-k}}{\sqrt {2^{n-k}-t_j}}\ket{01}+\sqrt {\frac{\left(2^{n-k}-t_j\right)\left(1-r_i\right)}{2^{n-k}}}\sum_{i'\notin S_j}\frac{\ket{i'}_{n-k}}{\sqrt {2^{n-k}-t_j}} \ket{00}.
\end{aligned}
\end{equation*}
 We denote 
\begin{equation*}
\begin{aligned}
\ket{\phi_0}=&\sum_{i'\in S_j}\frac{\ket{i'}_{n-k}}{\sqrt {t_j}}\ket{11},\\\ket{\phi_1}=&\sqrt {\frac{2^{n-k}}{2^{n-k}-t_jr_i}}\left(\sqrt {\frac{1-r_i}{2^{n-k}}}\sum_{i'\in S_j}\ket{i'}_{n-k} \ket{10}+\sqrt {\frac{r_i}{2^{n-k}}}\sum_{i'\notin S_j}\ket{i'}_{n-k}\ket{01}+\sqrt {\frac{1-r_i}{2^{n-k}}}\sum_{i'\notin S_j}\ket{i'}_{n-k} \ket{00}\right).
\end{aligned}
\end{equation*} Let  $\sin\widetilde{\theta_j}=\sqrt {\frac{r_it_j}{2^{n-k}}}$.  Then we can obtain 
\begin{equation}\label{a22}
\mathcal{A}_j\ket{0}_{n-k+2}  =\sin\widetilde{\theta_j}\ket{\phi_0}+\cos\widetilde{\theta_j}\ket{\phi_1}.
\end{equation}
We obtain results by measuring the last two qubits. Let $\sin{\theta}_j=\sqrt {\frac{t_j}{2^{n-k}}}.$ Then $\sin\widetilde{\theta_j}=\sqrt {r_i}\sin{\theta_j}.$ As long as we estimate the amplitude $\sin^2\theta_j$, we can determine $t_j$, and ultimately calculate $t=\sum_{j=0}^{2^k-1}t_j$.

The value of $r_i$ is related to the current estimated angle interval $[\theta_{j,i}^{min},\theta_{j,i}^{max}]$, with an initial value of $ r_0 = 1$. If the current value of $K_i$ satisfies \begin{equation*}\label{2z}
    \left\lfloor \frac{2K_i \theta_{j,i}^{min}}{\pi} \right\rfloor +1= \left\lceil \frac{2K_i \theta_{j,i}^{max}}{\pi} \right\rceil , 
\end{equation*} then $r_i = 1$; otherwise, \begin{equation*}\label{3z}
r_i = \frac{\sin^2{\frac{(R_i+1)\pi}{2K_i}}}{\sin^2{\theta_{j,i}^{max}}},\end{equation*}
where \begin{equation*}\label{4z}
R_i= \left\lfloor \frac{2K_i \theta_{j,i}^{min}}{\pi} \right\rfloor.\end{equation*}  
In fact, we only use the adjustment of $r_i$ when $\theta_{j,i}^{max}$ is very close to $\frac{(R_i+1)\pi}{2K_i}$; otherwise, we set $r_i=1$. 

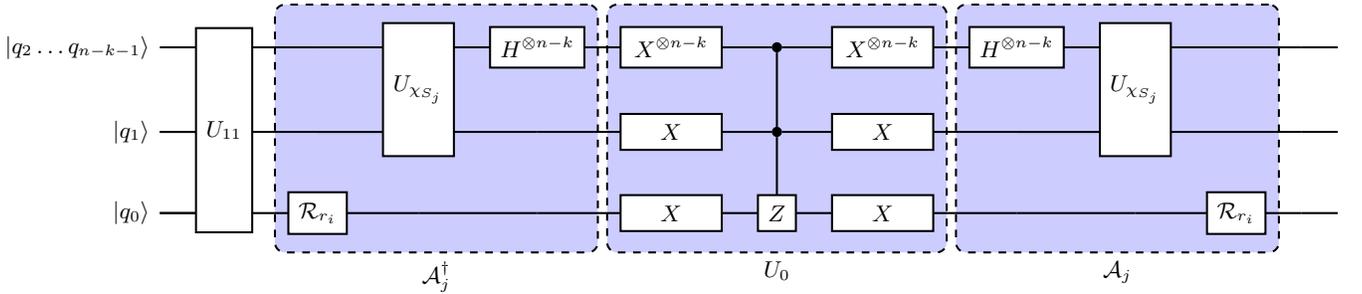
\begin{figure}[h]
		\centering
		\begin{adjustbox}{width=1\textwidth}
\begin{quantikz}
\lstick{$\ket{q_2\ldots q_{n-k-1}}$} &\gate[3]{U_{11}}    & \qw  \gategroup[wires=3,steps=3,style={dashed,rounded
corners,fill=blue!20, inner
xsep=2pt},background,label style={label
position=below,anchor=north,yshift=-0.2cm}]{$\mathcal{A}^{\dagger}_j$}&  \gate[2]{U_{\chi_{S_j}}}& \gate[1]{H^{\otimes n-k}}  &\gate[1][1.4cm]{X^{\otimes n-k}}\gategroup[wires=3,steps=3,style={dashed,rounded
corners,fill=blue!20, inner
xsep=2pt},background,label style={label
position=below,anchor=north,yshift=-0.2cm}]{$U_0$} &\ctrl{1}&\gate[1][1.4cm]{X^{\otimes n-k}}&\gate[1]{H^{\otimes n-k}}
 \gategroup[wires=3,steps=3,style={dashed,rounded
corners,fill=blue!20, inner
xsep=2pt},background,label style={label
position=below,anchor=north,yshift=-0.2cm}]{$\mathcal{A}_j$} &\gate[2]{U_{\chi_{S_j}}} 
& \qw& \qw& \qw\\
\lstick{$\ket{q_1}$}& \qw  & \qw& \qw& \qw&\gate[1][1.4cm]{X} &\ctrl{1}&\gate[1][1.4cm]{X}&\qw&\qw&\qw& \qw& \qw\\
\lstick{$\ket{q_0}$}& \qw  & \gate[1]{\mathcal{R}_{r_i}} & \qw& \qw  
&\gate[1][1.4cm]{X} &\gate[1]{Z}&\gate[1][1.4cm]{X}
& \qw&\qw& \gate[1]{\mathcal{R}_{r_i}}& \qw& \qw
\end{quantikz}
	\end{adjustbox}
\caption{The circuit for $Q_j$ of  DIQC algorithm.}
		\label{DIQC algorithm}
\end{figure}

Below, we present a distributed quantum counting algorithm(DIQC). $2^k$ computing nodes execute DIQC in parallel. In fact, if there are not two quantum computers, a single quantum computer with $n-k$ qubits can still perform the tasks. However, we need to sacrifice some time to let it execute the subsequent algorithms node by node. The probability of obtaining an element from set $S_j $ in each measurement is denoted as $\mathbb{P}\left[11\right]$. In the context of DIQC, the double subscript $ji$ indicates that $j$ represents the $j$-th computer, and $i$ represents the iteration number of the algorithm. A central classical computer assigns tasks to $2^k$ quantum computing nodes. After each quantum computing node completes  DIQC, they send the results back to the central computer via classical communication. Finally, the central computer provides an estimate based on the results.

\begin{algorithm}[H]
\caption{ DIQC}
\label{diqc}
\SetAlgoLined
\KwIn{$0<\epsilon\le0.01$, $0<\alpha< \frac{3}{4}$,  Operators $Q_j$, $N_0\in N^{+}$, Amplification factor $ q\in\{2, 3\}$ }%$L\ge50\in N^{+}$ }
\KwOut{The estimated value of $t_j$, denoted as $ t'_j\in\mathbb{N}$}
 $\epsilon_j =\frac{\epsilon}{2^k}$, $\alpha_j =\frac{\alpha}{2^k}$, $i = 0$\;
 $[\theta_{j,0}^{min}, \theta_{j,0}^{max}] = [0, \pi/2]$\; 
 $r_0=0$, $K_0=1$\; 
 $K_{max} = 2\left\lfloor {\frac{\pi}{8\epsilon_j}-\frac{1}{2}} \right\rfloor+1$\; 
\While{$\sin^2(\theta_{j,i}^{max}) -\sin^2(\theta_{j,i}^{min}) > 2\epsilon_j$}
{
 $N = 0$, $i = i+ 1$\;
 $K_{i}=K_{i-1}$\;
 $R_{i} = \lfloor \frac{2K_i \theta_{j,i-1}^{min}}{\pi} \rfloor$\; 
\If{ $\sin^2(\theta_{j,i}^{max}) -\sin^2(\theta_{j,i}^{min}) \ge 50\epsilon_j$}
{$q=2$\;}% where $2\le q_0 \le 3
\Else 
{  $q=3$\; }
 $\alpha_{j,i} = \frac{(q-1)\alpha_j}{q} \frac{K_i}{K_{max}}$\; 
 $N_{i}^{max} =  \left\lceil \frac{2}{\sin^2\frac{\pi}{21}\sin^2\frac{8\pi}{21}}\ln\left(\frac{2}{\alpha_{j,i}}\right) \right\rceil$\; 
\While{$K_{i-1}=K_{i}$}
{
 Prepare circuit $Q_j^{k_i}\mathcal{A}_j\ket{0}_{n-k}$, where $k_i=\frac{K_{i}-1}{2}$\; 
 Measure $\min\{N_0, N_{i}^{max}-N\}$ times \; 
 Combine the results of the current round to approximate ${a}_{j,i} = \mathbb{P}\left[11\right]$\; 
$N = N + \min\{1, N_{i}^{max}-N\}$\; 
 $\epsilon_{a_{j,i}} = \sqrt{\frac{1}{2N} \ln\left(\frac{2}{\alpha_{j,i}}\right)}$\; 
 $a_{j,i}^{\max} = \min(1,{a}_{j,i} + \epsilon_{a_{j,i}})$\; 
 $a_{j,i}^{\min} = \max(0, {a}_{j,i} - \epsilon_{a_{j,i}})$\; 
\If{$R_i$ is even}
  {$\widetilde{\theta_{j,i}^{min}} = \frac{R_i \cdot \pi/2 +\sin^{-1} \sqrt{a_{j,i}^{\min}}}{K_i}$, $\widetilde{\theta_{j,i}^{max}}= \frac{R_i \cdot \pi/2 + \sin^{-1} \sqrt{a_{j,i}^{\max}}}{K_i}$}
\Else
{ $\widetilde{\theta_{j,i}^{min}} = \frac{(R_i+1) \cdot \pi/2 -\sin^{-1} \sqrt{a_{j,i}^{\max}}}{K_i}$,
 $\widetilde{\theta_{j,i}^{max}}= \frac{(R_i+1)\cdot \pi/2 - \sin^{-1} \sqrt{a_{j,i}^{\min}}}{K_i}$}
 \If{$\sin^2(\widetilde{\theta_{j,i}^{min}})>r_i$ or $\sin^2(\widetilde{\theta_{j,i}^{max}})>r_i$}
 {$\theta_{j,i}^{min}=\theta_{j,i-1}^{min}$, $\theta_{j,i}^{max}=\theta_{j,i-1}^{max}$, $K_i=K_{i-1}$, Backtracking occurs }
 $\theta_{j,i}^{min}=\arcsin\sqrt{\frac{\sin^2(\widetilde{\theta_{j,i}^{min}})}{r_i}}$, $\theta_{j,i}^{max}=\arcsin\sqrt{\frac{\sin^2(\widetilde{\theta_{j,i}^{max}})}{r_i}}$\; 
\If{$\sin^2(\theta_{j,i}^{max}) -\sin^2(\theta_{j,i}^{min}) \le 2\epsilon_j$}{
 \textbf{break}\;
 }
 \textbf{FindNext K}\;
}
}
 Post-process all intervals that satisfy $\sin^2(\theta_{j,i}^{max}) -\sin^2(\theta_{j,i}^{min}) \le 3\epsilon_j$.  Within these intervals, let    $a_{j,i}^{u}=\sin^2(\theta_{j,i}^{max})$, $a_{j,i}^{l}=\sin^2(\theta_{j,i}^{min})$.  Require the output result $t'_j\in\mathbb{N}$ to satisfy $$\lvert t'_j- c_j\rvert\le\frac{2}{3},$$ where  $\omega_{i}=\frac{1}{a_{j,i}^{u}-a_{j,i}^{l}}$, $a_{i}=\frac{a_{j,i}^{l}+a_{j,i}^{u}}{2}$, $c_j=2^{n-k}\frac{\sum\limits_{i=m} a_{i}\omega_{i}}{\sum\limits_{i=m} \omega_{i}}$.\;
 \Return $[2^{n-k}c_j^{min}, 2^{n-k}c_j^{max}]$ and $t'_j$
\end{algorithm}

\begin{algorithm}[H]
\caption{FindNext K}
\label{findnext}
\SetAlgoLined
\KwIn{ $\theta_{j,i}^{min}$, $\theta_{j,i}^{max}$, $q$, $K_{i}$, backtrack}
 $k = \left\lfloor {\frac{\pi}{4(\theta_{j,i}^{max} - \theta_{j,i}^{min})}-\frac{1}{2}} \right\rfloor$\; 
 $K = 2k + 1$\;
\While{$K\geq qK_{i}$}{
\If {$\left\lfloor \frac{2K\theta_{j,i}^{min}}{\pi} \right\rfloor = \left\lceil \frac{2K\theta_{j,i}^{max}}{\pi} \right\rceil - 1$}
{return $r_{i}=1$, $K_{i}=K$\;}
\If { no backtracking occurs}
 {$R_i = \lfloor \frac{2K \theta_{j,i}^{min}}{\pi} \rfloor$\; 
 $r= \frac{\sin^2\left({\frac{(R_i+1)\pi}{2K}}\right)}{\sin^2{(\theta_{j,i}^{max}})}$\;}
\If {$r>\max(\sin^2\left(\frac{\pi}{2} \left(1 - \frac{1}{K}\right)\right),\frac{3}{4})$ and $\left\lfloor \frac{2K\arcsin\left(\sqrt{r}sin{\theta_{j,i}^{min}}\right)}{\pi} \right\rfloor = \left\lceil \frac{2K\arcsin\left(\sqrt{r}sin{\theta_{j,i}^{max}}\right)}{\pi} \right\rceil - 1$}
{return $r_{i}=r$, $K_{i}=K$\;}
 $K = K - 2$\;
 }
  \Return $K_{i}=K_{i-1}$
\end{algorithm}

\subsection{ Correctness analysis}
 In this section, we prove the correctness of the DIQC. We  provide  a theorem stating that each quantum computing node can output results with accuracy $\epsilon_j=\frac{\epsilon}{2^k}$ at a confidence level $1-\alpha_j=1-\frac{\alpha}{2^k} $.

  Let $c_j^{min}=\max(0, c_j-\frac{3\epsilon_j}{2})$,  $c_j^{max} =\min(c_j+\frac{3\epsilon_j}{2},1)$. We have the following result.
 \begin {theorem}\label{r1}
 Given a confidence level $1-\frac{4}{3}\alpha \in (0, 1)$ and target accuracy $\epsilon\in (0, 0.01]$, DIQC can output  $[2^{n-k}c_j^{min}, 2^{n-k}c_j^{max}]$ satisfying $$\mathbb{P}[2^{n-k}a_j \not \in [2^{n-k}c_j^{min}, 2^{n-k}c_j^{max}]] < \frac{4}{3}\alpha_j,$$
 with $O\left(\frac{1}{\epsilon_j}\ln \frac{1}{\alpha_j}\right)$ queries, where  $2^{n-k}c_j^{max} - 2^{n-k}c_j^{min} < 2^{n-k}3\epsilon_j$. Additionally, the estimated  value $t'_j\in\mathbb{N}$  satisfies \begin{equation*}
\mathbb{P}\left[\lvert t'_j- 2^{n-k}a_j\rvert\le\frac{2}{3}+2^{n-k-1}3\epsilon_j\right] > 1-\frac{4}{3}\alpha_j.
\end{equation*}

\end {theorem}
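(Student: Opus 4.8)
The plan is to adapt the MIQAE analysis behind Theorem~\ref{thm:algorithm} to the node operator $\mathcal{A}_j$ of Eq.~(\ref{a22}) and then treat the two ingredients that are new in DIQC: the amplitude-rescaling gate $\mathcal{R}_{r_i}$ with its backtracking branch, and the rounding of the inverse-width-weighted average to an integer $t'_j$. The only randomness sits in the empirical probabilities $a_{j,i}\approx\mathbb{P}[11]=\sin^2((2k_i+1)\widetilde{\theta_j})$; with $\epsilon_{a_{j,i}}=\sqrt{\tfrac1{2N}\ln(2/\alpha_{j,i})}$, the Chebyshev--Hoeffding inequality gives that the interval $[a_{j,i}^{\min},a_{j,i}^{\max}]$ computed at level $K_i$ misses the true value with probability at most $\alpha_{j,i}$, and Table~\ref{table:theta-i-ci} (applied to $\theta_j$ when $r_i=1$, and to $\widetilde{\theta_j}$ followed by the inversion $\sin^2\theta_j=\sin^2\widetilde{\theta_j}/r_i$ when $r_i<1$) turns this into a confidence interval $[\theta_{j,i}^{\min},\theta_{j,i}^{\max}]\ni\theta_j$ exactly as in \cite{grinko2021iterative,Fukuzawa2023}.

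The first task is the constant $\tfrac43\alpha_j$. Because \texttt{FindNext K} (Algorithm~\ref{findnext}) starts its search at $2\lfloor\tfrac{\pi}{4w}-\tfrac12\rfloor+1$ with $w=\theta_{j,i-1}^{\max}-\theta_{j,i-1}^{\min}$, only decreases $K$ from there, and also requires $K\ge qK_{i-1}$ with $q\in\{2,3\}$, the distinct levels actually used grow at least geometrically; moreover $w$ is at least the $\sin^2$-width, which exceeds $2\epsilon_j$ whenever \texttt{FindNext K} is invoked (since $\theta\mapsto\sin^2\theta$ is $1$-Lipschitz on $[0,\pi/2]$), so by monotonicity of the floor the starting $K$ --- hence every $K_i$ --- is at most $2\lfloor\tfrac{\pi}{8\epsilon_j}-\tfrac12\rfloor+1=K_{max}$. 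Up to the backtracking corrections discussed below, the run of the outer loop splits into an initial $q=2$ phase and a trailing $q=3$ phase; in the $q=2$ phase $w\ge 50\epsilon_j$ forces every level there to satisfy $K_i\le\tfrac{\pi}{100\epsilon_j}\le K_{max}/3$ (this is one place $\epsilon\le 0.01$ enters). Summing $\alpha_{j,i}=\tfrac{(q-1)\alpha_j}{q}\cdot\tfrac{K_i}{K_{max}}$ over the distinct levels of each phase, from the largest downwards, yields $\sum_{q=2}\alpha_{j,i}\le\tfrac{\alpha_j}{2K_{max}}\cdot 2\cdot\tfrac{K_{max}}{3}=\tfrac{\alpha_j}{3}$ and $\sum_{q=3}\alpha_{j,i}\le\tfrac{2\alpha_j}{3K_{max}}\cdot\tfrac32 K_{max}=\alpha_j$, so a union bound over all rounds caps the total failure probability at $\tfrac43\alpha_j$. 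On the complementary event every $[\theta_{j,i}^{\min},\theta_{j,i}^{\max}]$, hence every post-processed interval $[a_{j,i}^{l},a_{j,i}^{u}]$, contains $a_j$, and the returned window (of $\sin^2$-width $\le 3\epsilon_j$, scaled by $2^{n-k}$) contains $2^{n-k}a_j=t_j$.

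On that same event the integer estimate follows from convexity: $\bar a_j:=\tfrac{\sum_i a_i\omega_i}{\sum_i\omega_i}$ with $\omega_i=1/(a_{j,i}^{u}-a_{j,i}^{l})$ is a convex combination of the midpoints $a_i$, each of which satisfies $|a_i-a_j|\le\tfrac12(a_{j,i}^{u}-a_{j,i}^{l})=\tfrac1{2\omega_i}$ because its interval contains $a_j$; hence $|\bar a_j-a_j|\le\tfrac{\sum_i 1/2}{\sum_i\omega_i}\le\tfrac{3\epsilon_j}{2}$, using $\omega_i\ge 1/(3\epsilon_j)$ for every post-processed interval. Therefore $|c_j-t_j|=2^{n-k}|\bar a_j-a_j|\le 2^{n-k-1}\cdot 3\epsilon_j$; since every real number has an integer within distance $\tfrac12<\tfrac23$, the requirement $|t'_j-c_j|\le\tfrac23$ is satisfiable, and the triangle inequality gives $|t'_j-t_j|\le\tfrac23+2^{n-k-1}\cdot 3\epsilon_j$ with probability $>1-\tfrac43\alpha_j$. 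For the query cost, round $i$ runs $Q_j^{k_i}\mathcal{A}_j\ket{0}$, which calls $\mathcal{A}_j$ exactly $2k_i+1=K_i$ times, at most $N_i^{max}=O(\ln(1/\alpha_{j,i}))$ times; writing $K_i\asymp K_{max}q^{-s}$ gives $\alpha_{j,i}\gtrsim\alpha_j q^{-s}$ and $N_i^{max}=O(\ln(1/\alpha_j)+s)$, so $\sum_i K_iN_i^{max}$ is $K_{max}$ times a convergent combination of $\sum_s q^{-s}$ and $\sum_s s q^{-s}$, i.e.\ $O(K_{max}\ln(1/\alpha_j))=O(\tfrac1{\epsilon_j}\ln\tfrac1{\alpha_j})$.

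The step I expect to be the main obstacle is not any of the estimates above but the faithfulness of the $\mathcal{R}_{r_i}$/backtracking mechanism. One must check that when \texttt{FindNext K} returns $r_i<1$, the choice $r_i=\sin^2\!\big(\tfrac{(R_i+1)\pi}{2K_i}\big)/\sin^2\theta_{j,i}^{\max}$ together with the guard $r_i>\max\!\big(\sin^2(\tfrac\pi2(1-\tfrac1{K_i})),\tfrac34\big)$ keeps $K_i\widetilde{\theta_j}=K_i\arcsin(\sqrt{r_i}\sin\theta_j)$ strictly inside the quadrant indexed by $R_i$, so that Table~\ref{table:theta-i-ci} inverts $\mathbb{P}[11]$ unambiguously and $\sin^2\theta_j=\sin^2\widetilde{\theta_j}/r_i$ is a genuine confidence interval for $a_j$; that the in-loop guard ``$\sin^2\widetilde{\theta_{j,i}^{\min}}>r_i$ or $\sin^2\widetilde{\theta_{j,i}^{\max}}>r_i$'' triggers a backtrack only on an event already inside the $\alpha_{j,i}$ budget, so that re-measurements and reverts at a level neither break the $\tfrac43\alpha_j$ bound nor (in expectation) the query bound, and that the post-revert state is no worse, preserving termination and the two-phase picture used above; and, finally, that the engineered constant $N_i^{max}=\lceil\tfrac{2}{\sin^2(\pi/21)\sin^2(8\pi/21)}\ln(2/\alpha_{j,i})\rceil$ is large enough --- as shown for MIQAE in \cite{Fukuzawa2023} --- that the refined interval always admits a next level with the required growth factor $q$, which is exactly what legitimises the geometric-growth assumption underlying both the probability and the complexity bounds.
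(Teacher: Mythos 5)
Your proposal is correct and follows essentially the same route as the paper: a per-round Hoeffding bound united over the rounds, the geometric growth $K_i\ge qK_{i-1}$ (the paper's Lemma~\ref{l1}) to sum $\sum_i\alpha_{j,i}$ to $\tfrac43\alpha_j$ by splitting into the $q=2$ and $q=3$ phases, the convex-combination argument giving $|c_j-a_j|\le\tfrac{3\epsilon_j}{2}$ followed by the triangle inequality for $t'_j$, and $\sum_i K_iN_i^{\max}$ bounded by geometric series for the $O(\tfrac1{\epsilon_j}\ln\tfrac1{\alpha_j})$ query count. The $\mathcal{R}_{r_i}$/backtracking faithfulness you flag as the remaining obstacle is treated in the paper only by Propositions~\ref{proposition21}--\ref{proposition2} plus a one-sentence assertion inside the probability chain, so your proof is at the same level of completeness as the paper's.
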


The proof of this theorem is divided into two parts: the first part proves that $$\mathbb{P}[2^{n-k}a_j \not \in [2^{n-k}c_j^{min}, 2^{n-k}c_j^{max}]] < \frac{4}{3}\alpha_j,$$ and the second part proves the query complexity of the algorithm. These two parts of the proof can be found in the following content.

Notice that  Algorithm \ref{diqc} requires $\theta_{j,i}^{max}$ and $\theta_{j,i}^{min}$ to remain in the same quadrant after each scaling. If it does not satisfy this condition,  we prioritize using $K$ for adjustments, and at the appropriate time, incorporate $r$ to ensure faster convergence with a larger $K$. The choice of $K$ and $r$ is aimed at ensuring that the scaled angle remains within the same quadrant.
In fact, this is the purpose of Algorithm \ref{findnext}.

According to Algorithm \ref{findnext}$$k = \left\lfloor \frac{\pi}{4(\theta_{j,i}^{max} - \theta_{j,i}^{min})}-\frac{1}{2}\right\rfloor,$$ and  
$\theta_{j,i}^{max}-\theta_{j,i}^{min}> 2\epsilon_j$, we can set $K_{max}=2\left\lfloor {\frac{\pi}{8\epsilon_j}-\frac{1}{2}} \right\rfloor+1$.   It is clear that  $K_i< K_{max}$ for any $i$. %In order to facilitate the subsequent discussion, we will first present an existing lemma.

At the end of each round, when $N=N_{i}^{max}$, the next $K$  can certainly be found out. Algorithm \ref{findnext} either returns the old value $K_i$ or returns a larger $K$ that satisfies $K\ge qK_i$. In Algorithm \ref{findnext}, when 
$K$ satisfies $K\ge qK_i$ and $\frac{(R_i+1)\pi}{2K}<\theta^{max}$with 
$\theta^{max}$ being relatively close to 
$\frac{(R_i+1)\pi}{2K}$, adjusting by adding $r\in(\max(\sin^2\left(\frac{\pi}{2} \left(1 - \frac{1}{K}\right)\right),\frac{3}{4}),1]$ can possibly make $[\widetilde{\theta^{min}},\widetilde{\theta^{max}}]\subset[\frac{R_i\pi}{2K},\frac{(R_i+1)\pi}{2K}]$, thus allowing $K$ to keep the scaled interval in the same quadrant without needing to reduce $K$  to maintain the scaled interval in the same quadrant. %$1\ge r >\sin^2\left(\frac{\pi}{2} \left(1 - \frac{1}{K}\right)\right)$  because the adjusted $R_i $ must remain equal to the original $R_i$,  which means$\left\lfloor \frac{K\arcsin\left(\sqrt{r}sin{\theta_{j,i}^{min}}\right)}{\pi/2} \right\rfloor=\lfloor \frac{2K \theta_{j,i}^{min}}{\pi} \rfloor$.

\begin{proposition}\label{proposition21}
 If $1\ge r >\sin^2\left(\frac{\pi}{2} \left(1 - \frac{1}{K}\right)\right)$, $\frac{\pi}{2}> \theta_{j,i}^{min}\ge 0$, $K$ is an odd number greater than or equal to 1,  then $0\le\frac{2K \theta_{j,i}^{min}}{\pi}  -\frac{K\arcsin\left(\sqrt{r}sin{\theta_{j,i}^{min}}\right)}{\pi/2}< 1$.%which further implies $\left\lfloor \frac{K\arcsin\left(\sqrt{r}sin{\theta_{j,i}^{min}}\right)}{\pi/2} \right\rfloor=\lfloor \frac{2K \theta_{j,i}^{min}}{\pi} \rfloor$.
\end{proposition}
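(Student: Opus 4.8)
The plan is to abbreviate $\theta := \theta_{j,i}^{min} \in [0,\pi/2)$ and to observe that, since $\tfrac{K}{\pi/2} = \tfrac{2K}{\pi}$, the quantity to be bounded is
\[
\frac{2K}{\pi}\Bigl(\theta - \arcsin\bigl(\sqrt r\,\sin\theta\bigr)\Bigr),
\]
so it suffices to prove $0 \le \theta - \arcsin(\sqrt r\,\sin\theta) < \tfrac{\pi}{2K}$, and I would handle the two inequalities separately.

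The lower bound is immediate: from $0 < r \le 1$ we get $0 \le \sqrt r\,\sin\theta \le \sin\theta$, and since $\arcsin$ is increasing on $[0,1]$ with $\arcsin(\sin\theta) = \theta$ for $\theta \in [0,\pi/2)$, we obtain $\arcsin(\sqrt r\,\sin\theta) \le \theta$; as $\tfrac{2K}{\pi} > 0$ this gives the left-hand inequality.

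For the upper bound I would study $f(\theta) := \theta - \arcsin(\sqrt r\,\sin\theta)$ on $[0,\pi/2)$. If $r = 1$ then $f \equiv 0$ and there is nothing to prove, so assume $r < 1$. Differentiating, $f'(\theta) = 1 - \tfrac{\sqrt r\,\cos\theta}{\sqrt{1 - r\sin^2\theta}}$, and since $\cos\theta \ge 0$ on $[0,\pi/2)$ the inequality $f'(\theta) > 0$ reduces to $r\cos^2\theta < 1 - r\sin^2\theta$, i.e.\ to $r < 1$, which holds. Hence $f$ is strictly increasing, so $f(\theta) < \lim_{\vartheta \to \pi/2^-} f(\vartheta) = \tfrac{\pi}{2} - \arcsin\sqrt r$ for every $\theta \in [0,\pi/2)$. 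Finally, the hypothesis $r > \sin^2\!\bigl(\tfrac{\pi}{2}(1 - \tfrac1K)\bigr)$, together with the monotonicity of $\arcsin$ (both arguments lying in $[0,1]$ and $\tfrac{\pi}{2}(1-\tfrac1K)\in[0,\pi/2)$), gives $\arcsin\sqrt r > \tfrac{\pi}{2}(1 - \tfrac1K)$, whence $\tfrac{\pi}{2} - \arcsin\sqrt r < \tfrac{\pi}{2K}$. Combining the two displayed bounds yields $f(\theta) < \tfrac{\pi}{2K}$, and multiplying by $\tfrac{2K}{\pi}$ gives the right-hand inequality.

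The only point needing care is the supremum argument: $f$ does not attain its maximum on the half-open interval $[0,\pi/2)$, so it is the strict monotonicity of $f$ — not a value at an endpoint — that produces the strict inequality, and the degenerate case $r = 1$, where $f' \equiv 0$, must be split off at the outset. Beyond that the argument is a routine one-variable calculus estimate, so I do not anticipate a genuine obstacle.
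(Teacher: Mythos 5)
Your proof is correct and follows essentially the same route as the paper's: both reduce the claim to showing $0 \le \theta - \arcsin(\sqrt{r}\sin\theta) < \tfrac{\pi}{2K}$ and then bound the supremum of the left-hand side by the endpoint value $\tfrac{\pi}{2}-\arcsin\sqrt{r}$, which the hypothesis $r>\sin^2\bigl(\tfrac{\pi}{2}(1-\tfrac{1}{K})\bigr)$ makes strictly less than $\tfrac{\pi}{2K}$. The only difference is that you locate the supremum at $\pi/2$ via strict monotonicity ($f'>0$ when $r<1$, with the case $r=1$ split off), whereas the paper argues via concavity of $\arcsin(\sqrt{r}\sin\theta)$, hence convexity of the difference, and checks both endpoints --- a negligible variation.
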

\begin{proof}
Since $\arcsin x$ is an increasing function in the first quadrant and $r \le 1$, it follows that $\arcsin\left(\sqrt{r}sin{\theta_{j,i}^{min}}\right)\\ \le \theta_{j,i}^{min}$. Therefore, it is evident that $0\le\frac{2K \theta_{j,i}^{min}}{\pi}  -\frac{2K\arcsin\left(\sqrt{r}sin{\theta_{j,i}^{min}}\right)}{\pi}$. Next, we will prove that $\frac{2K \theta_{j,i}^{min}}{\pi}  -\frac{2K\arcsin\left(\sqrt{r}sin{\theta_{j,i}^{min}}\right)}{\pi}< 1$. We only need to prove that $\theta_{j,i}^{min} -\arcsin\left(\sqrt{r}sin{\theta_{j,i}^{min}}\right)< \frac{\pi}{2K } $ for $1\ge r >\sin^2\left(\frac{\pi}{2} \left(1 - \frac{1}{K}\right)\right)$. Let $f(x) = \arcsin\left(\sqrt{r}sin{x}\right) $, with $x\in[0,\frac{\pi}{2}]$. Then $f^{\prime\prime}(x)<0$, which means $f(x)$ is a concave function. Thus, $x - \arcsin\left(\sqrt{r}sin{x}\right)$ is a convex function in the first quadrant, and by the properties of convex functions, we know that its maximum value occurs at the endpoints. When $\theta_{j,i}^{min}=0$, it is clear that $\theta_{j,i}^{min} -\arcsin\left(\sqrt{r}sin{\theta_{j,i}^{min}}\right)< \frac{\pi}{2K } $. When $\theta_{j,i}^{min}=\frac{\pi}{2}$, we have $\frac{\pi}{2}-\arcsin\sqrt{r}<\frac{\pi}{2}-\frac{\pi}{2} \left(1 - \frac{1}{K}\right)=\frac{\pi}{2K }$. Therefore, we  prove $0\le\frac{2K \theta_{j,i}^{min}}{\pi}  -\frac{K\arcsin\left(\sqrt{r}sin{\theta_{j,i}^{min}}\right)}{\pi/2}< 1$.% and thus we can conclude $\left\lfloor \frac{K\arcsin\left(\sqrt{r}sin{\theta_{j,i}^{min}}\right)}{\pi/2} \right\rfloor=\lfloor \frac{2K \theta_{j,i}^{min}}{\pi} \rfloor$.
\end{proof}

 The result of Proposition \ref{proposition21} is a necessary condition for $\left\lfloor \frac{2K\arcsin\left(\sqrt{r}sin{\theta_{j,i}^{min}}\right)}{\pi} \right\rfloor=\lfloor \frac{2K \theta_{j,i}^{min}}{\pi} \rfloor$. In fact, we hope to use $ r$ only when $K \ge 5$ in order to minimize the occurrence of backtracking in the algorithm. Under normal operating conditions of the algorithm, $N_{i}^{max}$ measurements are sufficient to $\epsilon_{a_{j,i}}\le\frac{\sin\frac{\pi}{21}\sin\frac{8\pi}{21}}{2}$,  allowing the Algorithm \ref{diqc} to proceed to the next round. 
  We represent this conclusion with a proposition.

\begin{proposition}\label{proposition2}
 If $\forall i$, $2\epsilon_{a_{j,i}}\le\sin\frac{\pi}{21}\sin\frac{8\pi}{21}$,$K_i\widetilde{\theta_{j,i}^{min}}$and $K_i \widetilde{\theta_{j,i}^{max}}$are located in the same quadrant, $\widetilde{\theta_{j,i}^{min}}\neq\widetilde{\theta_{j,i}^{max}}$, then  Algorithm \ref{findnext} returns an odd number  $K\ge3K_i$ and $r\in(\sin^2\left(\frac{\pi}{2} \left(1 - \frac{1}{K}\right)\right),1]$ such that 
$$\left\lfloor \frac{2K\arcsin\left(\sqrt{r}sin{ \theta^{min}}\right)}{\pi} \right\rfloor = \left\lceil \frac{2K\arcsin\left(\sqrt{r}sin{ \theta^{max}}\right)}{\pi} \right\rceil - 1,$$where $K_i$ is the odd number used in the previous round.
\end{proposition}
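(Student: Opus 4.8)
The statement asserts that Algorithm~\ref{findnext} makes genuine progress under the three hypotheses: it does not exhaust its candidate list and fall through to the final line \texttt{Return} $K_i=K_{i-1}$, but exits through one of the two \texttt{return} statements with an odd $K\ge 3K_i$ and an associated $r\in(\sin^2(\tfrac\pi2(1-\tfrac1K)),1]$, the displayed floor/ceiling identity being exactly the test checked immediately before that \texttt{return}. Since the \texttt{while}-loop scans $K=\widehat K,\widehat K-2,\dots$ downward (where $\widehat K:=2\lfloor\tfrac\pi{4\Delta}-\tfrac12\rfloor+1$, $\Delta:=\theta_{j,i}^{max}-\theta_{j,i}^{min}$) and stops at the first $K$ passing a test, it returns the \emph{largest} admissible $K$; hence it suffices to exhibit one odd $K'\ge 3K_i$ at which one of the two tests passes. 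I would proceed in three steps.

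\textbf{Step 1 (a width bound).} I would turn the hypothesis $2\epsilon_{a_{j,i}}\le\sin\tfrac\pi{21}\sin\tfrac{8\pi}{21}$ into an estimate $\Delta\le c/K_i$ with a constant $c<\pi/6$. The interval fed to Algorithm~\ref{findnext} arises from: measuring $\sin^2(K_i\widetilde\theta_{j,i})$ up to a window $[a_{j,i}^{min},a_{j,i}^{max}]$ of length $\le 2\epsilon_{a_{j,i}}$; inverting through Table~\ref{table:theta-i-ci}, which is legitimate because $K_i\widetilde\theta_{j,i}^{min},K_i\widetilde\theta_{j,i}^{max}$ lie in one quadrant, and which contributes at most $\tfrac1{K_i}\arcsin\sqrt{2\epsilon_{a_{j,i}}}$ to $\widetilde\theta_{j,i}^{max}-\widetilde\theta_{j,i}^{min}$ (the extremal window abutting $1$); and the rescaling $\theta=\arcsin(\sin\widetilde\theta/\sqrt{r_i})$, whose Lipschitz constant is controlled because the algorithm proceeds only when $\sin^2\widetilde\theta_{j,i}^{max}\le r_i$ (no back-tracking) and $r_i>\tfrac34$. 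The real content here is controlling these two \texttt{arcsin} derivatives near a quadrant boundary; granted that, $\Delta\le c/K_i$ follows, and then $\widehat K>\tfrac\pi{2\Delta}-2\ge\tfrac{\pi K_i}{2c}-2\ge 3K_i$ (using $c<\pi/6$, with the finitely many very small $K_i$ checked directly), so the set of odd candidates in $[3K_i,\widehat K]$ is nonempty.

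\textbf{Step 2 (existence of a good $K'$ — the crux).} For every candidate $K\le\widehat K$ we have $K\Delta\le\widehat K\Delta\le\tfrac\pi2$, so $[K\theta_{j,i}^{min},K\theta_{j,i}^{max}]$ is no longer than a quadrant and meets at most one quadrant boundary. Tracking $\beta(K):=\tfrac{2K}{\pi}\theta_{j,i}^{min}\bmod 1$ as $K$ descends through the odd candidates, I would show that one can select $K'\ge 3K_i$ in one of two regimes. (a) $\beta(K')$ lies in a window forcing $[K'\theta_{j,i}^{min},K'\theta_{j,i}^{max}]$ inside one quadrant: then the first \texttt{if} returns $(K',1)$, and $1\in(\sin^2(\tfrac\pi2(1-\tfrac1{K'})),1]$ since $\sin^2(\tfrac\pi2(1-\tfrac1{K'}))<1$. (b) $\beta(K')$ is just below $1$, i.e.\ $\theta_{j,i}^{max}$ is slightly above the boundary $\tfrac{(R'+1)\pi}{2K'}$ with $R'=\lfloor\tfrac{2K'\theta_{j,i}^{min}}{\pi}\rfloor\ge 1$ (the calibration arranges $R'\ge 1$ here): then $r=\sin^2(\tfrac{(R'+1)\pi}{2K'})/\sin^2\theta_{j,i}^{max}<1$, but — since $0<\theta_{j,i}^{max}-\tfrac{(R'+1)\pi}{2K'}<\Delta$ and $\Delta$ is small by Step~1 — $r>\max(\sin^2(\tfrac\pi2(1-\tfrac1{K'})),\tfrac34)$; Proposition~\ref{proposition21} gives $\lfloor\tfrac{2K'\arcsin(\sqrt r\sin\theta_{j,i}^{min})}{\pi}\rfloor=R'$, while $\arcsin(\sqrt r\sin\theta_{j,i}^{max})=\tfrac{(R'+1)\pi}{2K'}$ gives $\lceil\tfrac{2K'\arcsin(\sqrt r\sin\theta_{j,i}^{max})}{\pi}\rceil=R'+1$, so the second \texttt{if} returns $(K',r)$. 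The delicate part, and where I expect the main difficulty, is calibrating the windows for $\beta$ — the interlocking of the constants $\tfrac\pi{21},\tfrac{8\pi}{21}$, the factor $3$ (versus the loop bound $q$), and the length of the candidate range — so that the descending sweep of $\beta$ must reach regime (a) or (b) while $K$ is still $\ge 3K_i$, whatever the starting interval; this adapts the quadrant bookkeeping behind the modified IQAE analysis (cf.\ Theorem~\ref{thm:algorithm}), with the $r$-mechanism supplying the extra regime (b).

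\textbf{Step 3 (conclusion).} By Steps~1--2 the loop returns an odd $K\ge 3K_i$ and $r\in(\sin^2(\tfrac\pi2(1-\tfrac1K)),1]$, and in both regimes the identity $\lfloor\tfrac{2K\arcsin(\sqrt r\sin\theta^{min})}{\pi}\rfloor=\lceil\tfrac{2K\arcsin(\sqrt r\sin\theta^{max})}{\pi}\rceil-1$ holds because it is precisely the condition verified just before the corresponding \texttt{return}: for $r=1$ it collapses, via $\arcsin\sin\theta=\theta$, to the first \texttt{if}, and for $r<1$ it is the equality checked in the second \texttt{if}. This is exactly the claim of the proposition.
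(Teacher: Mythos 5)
Your overall skeleton matches the paper's: since Algorithm \ref{findnext} scans odd candidates downward and returns at the first test that succeeds, it suffices to exhibit a single odd $K'\ge 3K_i$ passing one of the two tests, and $r=1$ lies in the required range $(\sin^2(\frac{\pi}{2}(1-\frac{1}{K})),1]$. But your Step 2 --- which you yourself flag as ``the crux'' and ``where I expect the main difficulty'' --- is not actually carried out: you describe a sweep of $\beta(K)=\frac{2K}{\pi}\theta_{j,i}^{min}\bmod 1$ over the candidate range and assert that a suitable calibration of windows for $\beta$ would force regime (a) or (b) to occur before $K$ drops below $3K_i$, but you give no argument that such a calibration exists. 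That existence claim is precisely the content of the proposition, so as written the proof has a genuine gap at its central step.

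The paper closes this gap differently and much more directly. Writing $\theta^{min}=K_i\widetilde{\theta_{j,i}^{min}}$ and $\theta^{max}=K_i\widetilde{\theta_{j,i}^{max}}$, the hypothesis $2\epsilon_{a_{j,i}}\le\sin\frac{\pi}{21}\sin\frac{8\pi}{21}$ yields $\left|\sin^2\theta^{max}-\sin^2\theta^{min}\right|\le\sin\frac{\pi}{21}\sin\frac{8\pi}{21}$, hence the half-width of $[\theta^{min},\theta^{max}]$ about its midpoint $\theta$ is bounded by $\epsilon_\theta(\theta)=\min\left(\frac12\arcsin\left(\frac{\sin\frac{\pi}{21}\sin\frac{8\pi}{21}}{\sin 2\theta}\right),\frac{\pi}{2}-\theta,\theta\right)$. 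The paper then introduces the tent functions $f_m$, whose graphs describe exactly those intervals that remain inside a single quadrant after multiplication by $m$, and invokes Appendix A of Ref.~\cite{Fukuzawa2023}, which proves $\epsilon_\theta\le\max(f_3,f_5,f_7)$ pointwise on $[0,\frac{\pi}{2}]$. This one cited inequality is what your argument is missing: it guarantees that multiplying the current interval by $3$, $5$, or $7$ keeps it within one quadrant, i.e.\ some odd $K\ge 3K_i$ passes the first test already with $r=1$. In particular the $r<1$ mechanism of your regime (b) is never needed for the existence claim --- it only lets the algorithm select a larger $K$ in practice --- and your Step 1 detour through the unscaled width $\Delta$ and the $r_i$-rescaling (with its Lipschitz estimates for $\arcsin$ near quadrant boundaries) is likewise unnecessary, since the paper works directly with the already-scaled interval supplied by the hypothesis. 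If you replace the $\beta$-sweep sketch by the tent-function lemma (or reprove it), the rest of your argument goes through.
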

\begin{proof}
From lines 23 and 24 of the algorithm, it can be seen that $a_{j,i}^{\max}-a_{j,i}^{\min} \le2\epsilon_{a_{j,i}}$.
Combining line 26 and line 29 in the algorithm, we have \begin{align}\label{a10}
a_{j,i}^{\max}-a_{j,i}^{\min}=
 &\left|\sin^2(K_i \widetilde{\theta_{j,i}^{max}})-\sin^2(K_i\widetilde{\theta_{j,i}^{min}})\right|\\
 %= &\left|\sin^2\left(K_i\arcsin\left(\sqrt{r_i}sin{ \theta^{max}}\right)\right)-\sin^2\left(K_i\arcsin\left(\sqrt{r_i}sin{ \theta^{min}}\right)\right)\right|\\
 \le&\sin\frac{\pi}{21}\sin\frac{8\pi}{21}.\end{align} 
 Let $K_i \widetilde{\theta_{j,i}^{max}}=\theta^{max}$, $K_i\widetilde{\theta_{j,i}^{min}}=\theta^{min}$. We consider the case in the first quadrant, and similar results can be obtained for the other quadrants. Let $2\theta={\theta^{max}}+{\theta^{min}}$ and $2\epsilon_\theta={\theta^{max}}-{\theta^{min}}$. Define the following function $$\epsilon_\theta(\theta)=\min \left( \frac{1}{2}\arcsin\left(\frac{\sin\frac{\pi}{21}\sin\frac{8\pi}{21}}{\sin2\theta}\right), \frac{\pi}{2}-\theta, \theta \right),$$ for $\theta\in[0,\frac{\pi}{2}]$. 
%Ensure that the function $\epsilon_\theta$ is dominated by the first term, thus constraining $$\theta\in\left[\frac{1}{2}\arcsin{\sqrt{\sin\frac{\pi}{21}\sin\frac{8\pi}{21}}}, \frac{\pi}{2}-\frac{1}{2}\arcsin{\sqrt{\sin\frac{\pi}{21}\sin\frac{8\pi}{21}}}\right].$$
  Given an integer $m$, divide the first quadrant into 
$m$ intervals, namely $\left[0, \frac{\pi}{2m}\right],\left[ \frac{\pi}{2m}, \frac{2\pi}{2m}\right], \cdots,\\ \left[ \frac{(m-1)\pi}{2m}, \frac{\pi}{2}\right]$, and define the piecewise  linear function on each interval: 
\begin{equation}
f_m(x)=\min \left( x- \frac{(p-1)\pi}{2m},  \frac{p\pi}{2m}-x \right), x\in\left[ \frac{(p-1)\pi}{2m}, \frac{p\pi}{2m}\right], p\in\{1,\cdots,m\}.
\end{equation}
Observe that as long as the current $\left[\theta^{min},\theta^{max}\right]$ is within these intervals, multiplying by $m$ will still keep it within the same quadrant.  We consider the case of $m\ge3$. Let $f_{\max}(x)= \max\left(f_3(x), f_5(x), f_7(x)\right)$.
 According to Appendix A of Ref.\cite{Fukuzawa2023}, it is known that $\epsilon_\theta\le f_{\max}(\theta)$ for $\theta\in[0,\frac{\pi}{2}]$. Therefore, in the worst case, we can always find an odd number  $K\ge3K_i$, as well as $r = 1$, such that \begin{equation}\left\lfloor \frac{2K\arcsin\left(\sqrt{r}sin{ \theta^{min}}\right)}{\pi} \right\rfloor = \left\lceil \frac{2K\arcsin\left(\sqrt{r}sin{ \theta^{max}}\right)}{\pi} \right\rceil - 1.\end{equation}
\end{proof}
Since the number of measurements increases gradually, with a maximum of $N_{i}^{max}$, this $N_{i}^{max}$ ensures that the conditions of Proposition \ref{proposition2} are satisfied.  In practice, depending on the algorithm's requirements, a smaller $K'$ may also be found, with $K'>qK_i$. Here, $q=\frac{K'}{K_i}\ge2$. 

Assuming $t-1$ is  the smallest positive integer that satisfies $\sin^2(\theta_{j,i}^{max}) -\sin^2(\theta_{j,i}^{min}) > 2\epsilon_j$. Below, we present a lemma concerning the relationship between $K_i(0\le i\le t)$ values.

\begin {lemma} \label{l1}
If  the function $f$ is  increasing on $[K_1, K_{max}]$, $K_i\ge q K_{i-1}>1$ for any $1\le i \le t$, $2\le q\le3$,  then \begin{equation}\label{increasing-sum}
    \sum_{i=\hat{i}}^{t} f(K_i) \leq \sum_{i=0}^{t-\hat{i}} f\left(\frac{K_{max}}{q^{i}}\right),
\end{equation}where $1\le\hat{i}\le t$.
\end{lemma}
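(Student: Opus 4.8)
The plan is to establish the inequality by a reindexing-and-majorization argument, exploiting the multiplicative growth condition $K_i \ge qK_{i-1}$ together with monotonicity of $f$. First I would observe that since $K_i \ge qK_{i-1}$ for all $1 \le i \le t$, iterating downward from the top index gives $K_{t-\ell} \le K_t / q^{\ell}$ for every $0 \le \ell \le t - \hat{i}$; more generally $K_i \le K_t/q^{t-i}$. Combined with the bound $K_t \le K_{max}$ (established earlier in the paper, since $K_i < K_{max}$ for all $i$), this yields $K_i \le K_{max}/q^{t-i}$ for each $i$ in the summation range $\hat{i} \le i \le t$.

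Next I would apply the hypothesis that $f$ is increasing on $[K_1, K_{max}]$: since each $K_i$ lies in this interval and $K_i \le K_{max}/q^{t-i}$, monotonicity gives $f(K_i) \le f(K_{max}/q^{t-i})$ term by term. Summing over $i$ from $\hat{i}$ to $t$ then produces
\begin{equation*}
\sum_{i=\hat{i}}^{t} f(K_i) \le \sum_{i=\hat{i}}^{t} f\!\left(\frac{K_{max}}{q^{t-i}}\right).
\end{equation*}
Finally I would perform the substitution $\ell = t - i$, so that as $i$ ranges over $\{\hat{i}, \dots, t\}$ the index $\ell$ ranges over $\{0, \dots, t - \hat{i}\}$, giving exactly $\sum_{\ell=0}^{t-\hat{i}} f(K_{max}/q^{\ell})$, which is the right-hand side of \eqref{increasing-sum}.

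The main obstacle — really the only subtlety — is making sure the monotonicity step is legitimate: one must check that every argument appearing, both $K_i$ and the dyadic-type points $K_{max}/q^{t-i}$, actually lies in the interval $[K_1, K_{max}]$ where $f$ is assumed increasing. The upper endpoint is fine since $K_i \le K_{max}$ and $K_{max}/q^{t-i} \le K_{max}$. For the lower endpoint, $K_i \ge K_1$ holds because the $K_i$ are nondecreasing (indeed $K_i \ge qK_{i-1} > K_{i-1}$), and $K_{max}/q^{t-i} \ge K_i \ge K_1$ is not automatic but follows once we know $K_i \le K_{max}/q^{t-i}$, which is precisely the chained inequality derived in the first step. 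I would also note the edge case $\hat{i} = t$, where both sides reduce to $f(K_t) \le f(K_{max})$, handled directly by monotonicity. A brief remark that the argument only uses $q \ge 2 > 1$ (the precise value of $q \in [2,3]$ is irrelevant here) rounds out the proof.
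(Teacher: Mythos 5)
Your proposal is correct and follows essentially the same route as the paper: both establish the chained bound $K_i \le K_{max}/q^{t-i}$ (the paper by a short downward induction, you by iterating the growth condition from the top index), then apply monotonicity of $f$ term by term and reindex the sum via $\ell = t-i$. Your additional care in verifying that all arguments lie in $[K_1, K_{max}]$ is a minor refinement the paper leaves implicit, not a difference in method.
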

\begin{proof}
To prove this lemma, we first use mathematical induction to show that for any $ 1\le i \le t$, \begin{equation}\label{a1}
 K_i < \frac{K_{max}}{q^{t-i}}. \end{equation}
 We perform induction on $i$. When  $i = t$, it is evident that $K_t< K_{max}$. Assuming that  $1<i =n\le t$, $ K_n < \frac{K_{max}}{q^{t-n}}$ holds.
 Considering $i =n-1$, since $K_i\ge q K_{i-1}$, we have $qK_{n-1}\le K_n$, from which we can deduce $K_{n-1}\le\frac{K_{n}}{q}< \frac{K_{max}}{q^{t-n+1}}$. 
   Thus, it proves that inequation (\ref{a1}).
 
 Since the function $f$ is increasing, it follows that $\forall 1\le i \le t$, \begin{equation}\label{a2}
 f(K_i)\leq f\left(\frac{K_{max}}{q^{t-i}}\right).\end{equation}
 By summing both sides of inequality (\ref{a2}), we obtain\begin{equation}\label{a3}
    \sum_{i=\hat{i}}^{t} f(K_i) \leq \sum_{i=\hat{i}}^{t} f\left(\frac{K_{max}}{q^{t-i}}\right)= \sum_{i=0}^{t-\hat{i}} f\left(\frac{K_{max}}{q^{i}}\right).
\end{equation}
\end{proof}

According to the last step of the Algorithm \ref{diqc}, we calculate the weighted average for all intervals that satisfy $2\epsilon_j \le a_{j,i}^{u} - a_{j,i}^{l} \le 3\epsilon_j$. Assuming that after the $m$-th iteration, $a_{j,i}^{u} - a_{j,i}^{l} = 3\epsilon_j$ is obtained, there are a total of $t - m + 1$ intervals that meet this condition. $2^{n-k}\frac{\sum\limits_{i=m} a_{i}\omega_{i}}{\sum\limits_{i=m} \omega_{i}}$ is used as an estimate, where  $\omega_{i}=\frac{1}{a_{j,i}^{u}-a_{j,i}^{l}}$, $a_{i}=\frac{a_{j,i}^{l}+a_{j,i}^{u}}{2}$, but it is not necessarily an integer. Let $c_j=2^{n-k}\frac{\sum\limits_{i=m} a_{i}\omega_{i}}{\sum\limits_{i=m} \omega_{i}}$. 

We will first prove $$\mathbb{P}[2^{n-k}a_j \not \in [2^{n-k}c_j^{min}, 2^{n-k}c_j^{max}]] < \frac{4}{3}\alpha_j,$$ where $c_j^{max}=\min(c_j+\frac{3\epsilon_j}{2},1)$, $c_j^{min}=\max(0, c_j-\frac{3\epsilon_j}{2})$, $2^{n-k}c_j^{max} - 2^{n-k}c_j^{min} < 2^{n-k}3\epsilon_j$. Thus, we can further prove that \begin{equation*}
\mathbb{P}\left[\lvert t'_j- 2^{n-k}a_j\rvert\le\frac{2}{3}+2^{n-k-1}3\epsilon_j\right] > 1-\frac{4}{3}\alpha_j,
\end{equation*}
where $t'_j$ is the estimated value.
\begin{proof}
Since  $c_j$ is a convex combination of $a_{i}$, we can use the properties of convex combinations to obtain $\min\limits_{i}a_{i}\le c_j \le \max\limits_{i}a_{i}$. Let $a_j=\sin^2{\theta}_j$.   If $\forall$ $i\in\{m,m+1\cdots,t\}$, $a_j \in [a_{j,i}^{l}, a_{j,i}^{u}]$,   $a_{j,i}^{u} - a_{j,i}^{l} \le 3\epsilon_j$, $a_{i}$ is the midpoint of $a_{j,i}^{u}$ and $a_{j,i}^{l}$, we have $|a_{i}-a_j|\le\frac{3\epsilon_j}{2}$. Therefore, we have $\min\limits_{i}a_{i}=a_j-\frac{3\epsilon_j}{2}$, $\max\limits_{i}a_{i}=a_j+\frac{3\epsilon_j}{2}$.  Ultimately we can obtain $|c_j-a_j|\le\frac{3\epsilon_j}{2}$. Let $c_j^{max}=\min(c_j+\frac{3\epsilon_j}{2},1)$, $c_j^{min}=\max(0, c_j-\frac{3\epsilon_j}{2})$.
We have $2^{n-k}c_j^{max} - 2^{n-k}c_j^{min} < 2^{n-k}3\epsilon_j$.

As $2^{n-k}$, $c_j^{min}$, and $c_j^{max}$ are all non-negative numbers, it follows that \begin{equation}
 \mathbb{P}[2^{n-k}a_j\not \in [2^{n-k}c_j^{min}, 2^{n-k}c_j^{max}]]=  \mathbb{P}[a_j\not \in [c_j^{min}, c_j^{max}]].
\end{equation}
We have  the following equation:
\begin{align}
\mathbb{P}[2^{n-k}a_j \not \in [2^{n-k}c_j^{min}, 2^{n-k}c_j^{max}]]
=&\mathbb{P}[a_j\not \in [c_j^{min}, c_j^{max}]]\\
   \le &\mathbb{P}[\exists i \in\{1,\cdots,t\}, a_{j,i}\not \in [a_{j,i}^{min}, a_{j,i}^{max}]]\\
      \le & \sum_{i=1}^{t}\mathbb{P}[a_{j,i}\not \in [a_{j,i}^{min}, a_{j,i}^{max}]]\\
     \le & \sum_{i=1}^{t}2e^{-2N{\epsilon_{a_{ji}}}^2} \\
     =& \sum_{i=1}^{t}{\alpha_{j,i}}\\
     =&  \sum_{i=1}^{t}\frac{(q-1)\alpha_j}{q} \frac{K_i}{K_{max}}.
\end{align}
 The first inequality is due to the fact that if $a_{j,i} \in [a_{j,i}^{min}, a_{j,i}^{max}]$ occurs for any $1\le i \le t$, then combined with $R_i$ and $r_i$, we can infer $\theta_{j}\in[\theta_{j,i}^{min},\theta_{j,i}^{max}]$, thus $a_j \in [a_{j,i}^{l}, a_{j,i}^{u}]$, so  $a_j  \in [c_j^{min}, c_j^{max}]$ can be deduced in the end. %That is to say, $\forall1\le i \le t,  a_{j,i} \in [a_{j,i}^{min}, a_{j,i}^{max}]\Rightarrow a_j  \in [c_j^{min}, c_j^{max}]$. %From the algorithm process, it can be seen that  for any \( i \), there is $a_{j,i} \in [a_{j,i}^{min}, a_{j,i}^{max}]$. Thus, combined with $R_i$ and $r_i$, we can infer $\theta_{j,i}\in[\theta_{j,i}^{min},\theta_{j,i}^{max}]$. Therefore, in the final step, there is $\theta_{j,t}\in[\theta_{j,t}^{min},\theta_{j,t}^{max}]$, leading to $a_{j}\in[a_{j}^{min},a_{j}^{max}]$. 
 The third inequality arises because we choose the Chernoff-Hoeffding bound\cite{Hoeffding1994} to construct the confidence interval, requiring $\epsilon_{a_{j,i}}= \sqrt{\frac{1}{2N} \ln\left(\frac{2}{\alpha_{j,i}}\right)}$.

Combining Lemma \ref{l1}, for $f(x)=x$, 
 we obtain \begin{equation}\label{a4}
     \sum_{i=\hat{i}}^{t}K_i \leq \sum_{i=0}^{t-\hat{i}} \frac{K_{max}}{q^{i}}.
\end{equation}
Assume that in these $t$ times, there are $T_1\in N$ times by using $2$ and $T_2\in N$ times by using 3, where $T_1 + T_2 = t$. 
We have \begin{align}
\sum_{i=1}^{t}\frac{(q-1)\alpha_j}{q} \frac{K_i}{K_{max}}
&=\sum_{i=1}^{T_1}\frac{\alpha_j}{2} \frac{K_i}{K_{max}}+\sum_{i=T_1+1}^{t}\frac{2\alpha_j}{3} \frac{K_i}{K_{max}}\\
%&<\sum_{i=1}^{t}\frac{(q_0-1)\alpha_j}{q_0} \frac{K_i}{K_{max}}+\sum_{i=1}^{t}\frac{2\alpha_j}{3} \frac{K_i}{K_{max}}\\
&\le \sum_{i=0}^{T_1-1}\frac{1}{2} \frac{\alpha_j}{2^{i}}+\sum_{i=0}^{t-T_1-1}\frac{2}{3} \frac{\alpha_j}{3^{i}}\\
&<\sum_{i=0}^{t}\frac{2}{3} \frac{\alpha_j}{2^{i}}=\frac{4}{3}\alpha_j.
\end{align}
Thus, it proves that \begin{equation}\label{a6}
\mathbb{P}[2^{n-k}a_j \not \in [2^{n-k}c_j^{min}, 2^{n-k}c_j^{max}]] <\frac{4}{3}\alpha_j.\end{equation}
According to the last step of the Algorithm \ref{diqc}, the output value $t'_j$ must be an integer and satisfy $$\lvert t'_j- 2^{n-k}c_j\rvert\le\frac{2}{3}.$$ Furthermore, since we have previously proven $|c_j-a_j|\le\frac{3\epsilon_j}{2}$, it follows that \begin{align}
\lvert t'_j- 2^{n-k}a_j\rvert=&\lvert t'_j- 2^{n-k}c_j+2^{n-k}c_j- 2^{n-k}a_j\rvert\\
\le&\lvert t'_j- 2^{n-k}c_j\rvert+\lvert 2^{n-k}c_j- 2^{n-k}a_j\rvert\\
\le&\frac{2}{3}+2^{n-k-1}3\epsilon_j,
\end{align}
$$\mathbb{P}[2^{n-k}a_j  \in [2^{n-k}c_j^{min}, 2^{n-k}c_j^{max}]] > 1-\frac{4}{3}\alpha_j.$$
Therefore, \begin{equation}\label{a7}
\mathbb{P}\left[\lvert t'_j- 2^{n-k}a_j\rvert<\frac{2}{3}+2^{n-k-1}3\epsilon_j\right] > 1-\frac{4}{3}\alpha_j.
\end{equation}
\end{proof}

Next, we prove that the number of queries for the DIQC is $O\left(\frac{1}{\epsilon_j}\ln \frac{1}{\alpha_j}\right)$.
\begin{proof}
The number of calls to $O_{\chi_{S_j}}$ in the algorithm is\begin{equation}\label{a9}\sum_{i=1}^{t}k_iN_{i}\le\sum_{i=1}^{t}k_iN_{i}^{max}=\sum_{i=1}^{t}\frac{K_i-1}{2}N_{i}^{max}\le\sum_{i=1}^{t}\frac{K_i}{2}N_{i}^{max}.\end{equation}
%$\left\lceil \frac{2}{\sin^4\left(r_i\sqrt{{\frac{\pi^{5}}{32q^{5}K_i^3}}} \right)}\ln\left(\frac{2}{\alpha_{j,i}}\right) \right\rceil$
Substituting the value of $N_{i}^{max}$ into $\sum_{i=1}^{t}\frac{K_i}{2}N_{i}^{max}$, we obtain
 \begin{equation}\label{a10}\sum_{i=1}^{t}\frac{K_i}{2}N_{i}^{max}=\sum_{i=1}^{t}\frac{K_i}{\sin^2\frac{\pi}{21}\sin^2\frac{8\pi}{21}}\ln\left(\frac{2qK_{max}}{{(q-1)K_i \alpha_j }}\right)=\sum_{i=1}^{t}c K_i\ln\left(\frac{2qK_{max}}{{(q-1)K_i \alpha_j }}\right),\end{equation} where  $$c=\frac{1}{\sin^2\frac{\pi}{21}\sin^2\frac{8\pi}{21}}.$$
Let $$f(x)=x\ln\left(\frac{2qK_{max}}{{(q-1)x \alpha_j }}\right).$$ Then  $$f^{\prime}(x)=\ln\left(\frac{2qK_{max}}{{(q-1)x \alpha_j }}\right)-1.$$From \(f^{\prime}(x) >0 \), we can obtain $$x<\frac{2qK_{max}}{{(q-1)e \alpha_j }}.$$ As long as $\alpha_j\le 0.5$ and $q>1$, we have $$\frac{2q}{{(q-1)e \alpha_j }}>1.$$When $x\in[K_1, K_{max}]$, $f(x)$ is clearly increasing functions.  Therefore,  it follows from Lemma \ref{l1} that \begin{align}\label{a11}
&\sum_{i=1}^{t}c K_i\ln\left(\frac{2qK_{max}}{{(q-1)K_i \alpha_j }}\right)\nonumber\\
=&\sum_{i=1}^{T_1}c K_i\ln\left(\frac{4K_{max}}{{K_i \alpha_j }}\right)+\sum_{i=T_1+1}^{t}c K_i\ln\left(\frac{3K_{max}}{{K_i \alpha_j }}\right)\nonumber\\
\le&\sum_{i=0}^{T_1-1}\frac{cK_{max}}{2^{i}}\ln\left(\frac{2\cdot2^{i+1}}{{ \alpha_j }}\right)+\sum_{i=0}^{t-T_1-1}\frac{cK_{max}}{3^{i}}\ln\frac{3^{i+1}}{{ \alpha_j }}\nonumber\\
=&cK_{max}\left[\sum_{i=0}^{T_1-1}\frac{1}{2^{i}}\left(\ln 2^{i}+\ln\left(\frac{4}{{ \alpha_j }}\right)\right)+\sum_{i=0}^{t-T_1-1}\frac{1}{3^{i}}\left(\ln 3^{i}+\ln\frac{3}{{ \alpha_j }}\right)\right]\nonumber\\
=&cK_{max}\left[\ln 2\sum_{i=0}^{T_1-1}\frac{i}{2^{i}}+\sum_{i=0}^{T_1-1}\frac{1}{2^{i}}\ln\left(\frac{4}{{\alpha_j }}\right)+\ln 3\sum_{i=0}^{t-T_1-1}\frac{i}{3^{i}}+\sum_{i=0}^{t-T_1-1}\frac{1}{3^{i}}\ln\frac{3}{{ \alpha_j }}\right]\nonumber\\
<& {cK_{max}}\left[\ln 4+2\ln\frac{4}{{ \alpha_j }}+\frac{3}{4}\ln 3+\frac{3}{2}\ln\frac{3}{{ \alpha_j }}\right]\nonumber\\
=& \left(2c\left\lfloor {\frac{\pi}{8\epsilon_j}-\frac{1}{2}} \right\rfloor+c\right)\left[3\ln4+\frac{9}{4}\ln 3+\frac{7}{2}\ln\frac{1}{{ \alpha_j }}\right]\\
&\in O\left(\frac{1}{\epsilon_j}\ln \frac{1}{\alpha_j}\right).\nonumber\end{align}
\end{proof}
  The following is a theorem that addresses the counting problem by aggregating the results from various nodes.
\begin {theorem} \label{r2}
 If $2^k$ computing nodes execute  DIQC algorithm, then we can obtain   the estimated value 
$t'=\sum_{j=0}^{2^k-1}t'_j$ satisfying $$\mathbb{P}\left[|t'-t| \leq 2^{n-k-1}3\epsilon+\frac{2^{k+1}}{3}\right]>1-\frac{4}{3}\alpha.$$
\end {theorem}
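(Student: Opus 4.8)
The plan is to derive Theorem \ref{r2} from Theorem \ref{r1} by combining a union bound over the $2^k$ nodes with the triangle inequality. The first observation is that the quantity $2^{n-k}a_j$ appearing in Theorem \ref{r1} equals the integer $t_j=|S_j|$ exactly: since $\sin\theta_j=\sqrt{t_j/2^{n-k}}$ and $a_j=\sin^2\theta_j$, we get $2^{n-k}a_j=t_j$. Therefore, invoking Theorem \ref{r1} on node $j$ with its assigned parameters $\epsilon_j=\epsilon/2^k$ and $\alpha_j=\alpha/2^k$ gives, for every $j\in\{0,1\}^k$,
\begin{equation*}
\mathbb{P}\!\left[\,|t'_j-t_j|\le \tfrac{2}{3}+2^{n-k-1}3\epsilon_j\,\right] > 1-\tfrac{4}{3}\alpha_j .
\end{equation*}

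Next I would introduce the failure event $E_j=\{\,|t'_j-t_j|>\tfrac{2}{3}+2^{n-k-1}3\epsilon_j\,\}$, so that $\mathbb{P}[E_j]<\tfrac{4}{3}\alpha_j=\tfrac{4}{3}\cdot\tfrac{\alpha}{2^k}$, and apply the union bound:
\begin{equation*}
\mathbb{P}\!\left[\,\bigcup_{j=0}^{2^k-1}E_j\,\right]\le\sum_{j=0}^{2^k-1}\mathbb{P}[E_j]<2^k\cdot\tfrac{4}{3}\cdot\tfrac{\alpha}{2^k}=\tfrac{4}{3}\alpha .
\end{equation*}
Hence with probability greater than $1-\tfrac{4}{3}\alpha$ all $2^k$ per-node bounds hold simultaneously. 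On that event, using $t'=\sum_j t'_j$, $t=\sum_j t_j$ and the triangle inequality,
\begin{equation*}
|t'-t|\le\sum_{j=0}^{2^k-1}|t'_j-t_j|\le\sum_{j=0}^{2^k-1}\!\left(\tfrac{2}{3}+2^{n-k-1}3\epsilon_j\right)=2^k\cdot\tfrac{2}{3}+2^k\cdot 2^{n-k-1}3\cdot\tfrac{\epsilon}{2^k}=\tfrac{2^{k+1}}{3}+2^{n-k-1}3\epsilon ,
\end{equation*}
which is precisely the claimed bound; combining the two displays yields $\mathbb{P}[\,|t'-t|\le 2^{n-k-1}3\epsilon+\tfrac{2^{k+1}}{3}\,]>1-\tfrac{4}{3}\alpha$.

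There is no substantive mathematical obstacle in this aggregation step — all the heavy lifting was already done in Theorem \ref{r1}. The only places that need care are bookkeeping: one must union over the \emph{failure} events (rather than intersecting the success events) so that the confidence losses $\tfrac{4}{3}\alpha_j$ sum to exactly $\tfrac{4}{3}\alpha$, and one must track the substitution $\epsilon_j=\epsilon/2^k$ carefully so that the $2^k$ equal error contributions collapse to the stated $2^{n-k-1}3\epsilon+\tfrac{2^{k+1}}{3}$.
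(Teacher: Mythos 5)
Your proof is correct and follows essentially the same route as the paper: both reduce to Theorem \ref{r1} applied node-by-node with $\epsilon_j=\epsilon/2^k$ and $\alpha_j=\alpha/2^k$, and then sum the per-node error bounds via the triangle inequality to get $\tfrac{2^{k+1}}{3}+2^{n-k-1}3\epsilon$. The only difference is the final probabilistic step: you take a union bound over the failure events, whereas the paper multiplies the independent per-node success probabilities and then applies Bernoulli's inequality, $(1-\tfrac{4}{3}\alpha_j)^{2^k}\ge 1-\tfrac{2^{k+2}}{3}\alpha_j=1-\tfrac{4}{3}\alpha$; both yield the identical bound, and your version has the minor advantage of not needing the independence of the nodes.
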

\begin {proof} 
Let $t_j=2^{n-k}a_j$. Then $t=\sum_{j=0}^{2^k-1}t_j$.  According to Theorem \ref{r1}, for any $j$, we have  $$t'_j\in [-\frac{2}{3}-2^{n-k-1}3\epsilon_j+2^{n-k}a_j, \frac{2}{3}+2^{n-k-1}3\epsilon_j+2^{n-k}a_j].$$
Let $a_j^{min}=\sum_{j=0}^{2^k-1}\left(-\frac{2}{3}-2^{n-k-1}3\epsilon_j+2^{n-k}a_j\right)$, $a_j^{max}=\sum_{j=0}^{2^k-1}\left(\frac{2}{3}+2^{n-k-1}3\epsilon_j+2^{n-k}a_j\right)$. Then $t'\in [a_j^{min}, a_j^{max}]$, $a_j^{min}=-\frac{2^{k+1}}{3}-2^{n-k-1}3\epsilon+t$, $a_j^{max}=\frac{2^{k+1}}{3}+2^{n-k-1}3\epsilon+t$. 
 Furthermore,  $\forall j$,
 $$
 \mathbb{P}\left[t'_j\in [-\frac{2}{3}-2^{n-k-1}3\epsilon_j+2^{n-k}a_j, \frac{2}{3}+2^{n-k-1}3\epsilon_j+2^{n-k}a_j]\right]>1-\frac{4}{3}\alpha_j
 $$ and they are independent of each other, so we have 
 \begin{align}
  \mathbb{P}[t' \in [a_j^{min}, a_j^{max}]]\ge & \mathbb{P}[\forall j, t'_j \in [-\frac{2}{3}-2^{n-k-1}3\epsilon_j+2^{n-k}a_j, \frac{2}{3}+2^{n-k-1}3\epsilon_j+2^{n-k}a_j]] \\
     > & (1-\frac{4}{3}\alpha_j)^{2^k}\\
     \ge & 1- \frac{2^{k+2}}{3}\alpha_j\\
     =&   1-\frac{4}{3}\alpha.
\end{align}
The last inequality is obtained from the Bernoulli inequality.
\end{proof}
\subsection{ Comparison to related works}
Due to the relationship between our algorithm and MIQAE algorithm in Ref.\cite{Fukuzawa2023}, we will first compare them. Although Ref.\cite{Fukuzawa2023} does not mention counting, the current counting algorithm is actually closely related to the amplitude estimation algorithm.
Considering the issue of circuit depth, we adjust the process of finding $K$ by introducing a parameter that allows the algorithm to reach the target width more quickly, while ensuring that the maximum $K_{max}$ is less than or equal to the $K_{max}$ in Ref.\cite{Fukuzawa2023}. To ensure a high success probability in the end, we do not take the last interval that reached the target width as the output value of the algorithm. Instead, we relax the conditions appropriately and consider all intervals that meet the target width and are at least 1.5 times the target width as candidate values. The weighted average of these candidate values is then taken as the estimated value. Throughout the entire algorithm process, we divide our amplification factors into two stages: first 2, then 3. This approach is taken in order to reduce the number of measurements when the circuit depth is large. In most numerical examples, the algorithm performs better with two amplification factors compared to using a single amplification factor. Additionally, under the condition of having the same target width, our objective is to calculate a target that is less than or equal to the target in Ref.\cite{Fukuzawa2023}. Therefore, it may be possible to further reduce the number of measurements. If $K$ is still not found, the algorithm in Ref. \cite{Fukuzawa2023} may get stuck and cannot terminate. We  take this situation into account in our code. In actual operation, once all measurement attempts are exhausted, we will make one more attempt with $N_{i}^{max}$.  In the worst-case scenario, even after $N_{i}^{max}$ measurements,  $K$ is still not found at that point, the algorithm will terminate, output the current estimate, and determine that the algorithm  failed. From this perspective, our algorithm is more robust, but it increases query complexity. The numerical experiments conducted later also demonstrate that our algorithm performs better in terms of circuit depth and success rate.

We apply the algorithm from Ref.\cite{brassard2000quantum} to find the number of marked elements and compare it with our algorithm.  Define the $Q$ operator as follows: $Q=-\mathcal{A} U_0\mathcal{A}^{\dagger} U$, where  $$\mathcal{A}=\mathcal{H}^{\otimes n}, U\ket{x}_{n}=(-1)^{\chi_S(x)}\ket{x}_{n}, U_0 = \mathbb{I}_{n} - 2 \ket{0}_{n}\bra{0}_{n}. $$The algorithm from Ref.\cite{brassard2000quantum} is illustrated in Fig.\ref{QC algorithm}. By using the conclusion from Ref.\cite{mottonen2004quantum}, when a multi-qubit gate acts on an n-qubit state, the gate can be decomposed, requiring $4^n$ single-qubit gates and $4^n-2^{n+1}$ controlled-NOT gates. Therefore, for a controlled $Q$ operator, the total number of quantum gates required is $4^{n+1}+4^{n+1}-2^{n+2}$. The implementation of all $Q$ operators in the circuit shown in Fig.\ref{QC algorithm} requires $(2^m-1)(4^{n+1}+4^{n+1}-2^{n+2})$ quantum gates, the implementation of operator $\mathcal{A}$ requires $n$ quantum gates, and the implementation of ${QFT}^{\dagger}_{2^m}$ requires $\frac{m^{2}+m}{2}$ quantum gates. In Algorithm \ref{diqc}, the implementation of a $Q_j$ operator requires $4^{n-k+2}+4^{n-k+2}-2^{n-k+3}$ quantum gates. Let $k_{max}=\left\lfloor {\frac{\pi}{8\epsilon_j}-\frac{1}{2}} \right\rfloor$. The maximum number of calls to $Q_j$ does not exceed $k_{max}$, and therefore the number of quantum gates required to implement the $Q_j^{k_{max}}$ operator is $k_{max}(2^{2n-2k+5}-2^{n-k+3})$. Implementing a $\mathcal{A}_j$ operator requires $4^{n-k+2}+4^{n-k+2}-2^{n-k+3}$ quantum gates. Consider obtaining an estimated result with a probability of $\frac{8}{\pi^2}$ , and the estimation error is $\frac{1}{2}$. Let $m=n+1$, $\epsilon_j=\frac{1}{3\cdot2^{n}}$. We have $k_{max}\le3\cdot2^{n-3}\pi-\frac{1}{2}$. Table \ref{tabel1} is used to illustrate the comparison between the two algorithms. The maximum depth of a single $Q$ operator in Table \ref{tabel1} refers to the maximum number of iterations of $Q$, denoted as $Q^{K}$, where $K$ is the maximum value. Since the number of gates required to implement $Q$ is greater than that for $Q_j$, and the maximum depth of $Q$ is greater than that of $Q_j$, it is evident that 
 $$
\frac{n^{2}+7n+4}{2}+2^{n+2}  (4^{n+1}-2^{n+2}+1)>(2^{2n-2k+5}-2^{n-k+3})(3\cdot2^{n-3}\pi+\frac{1}{2}). $$
 
 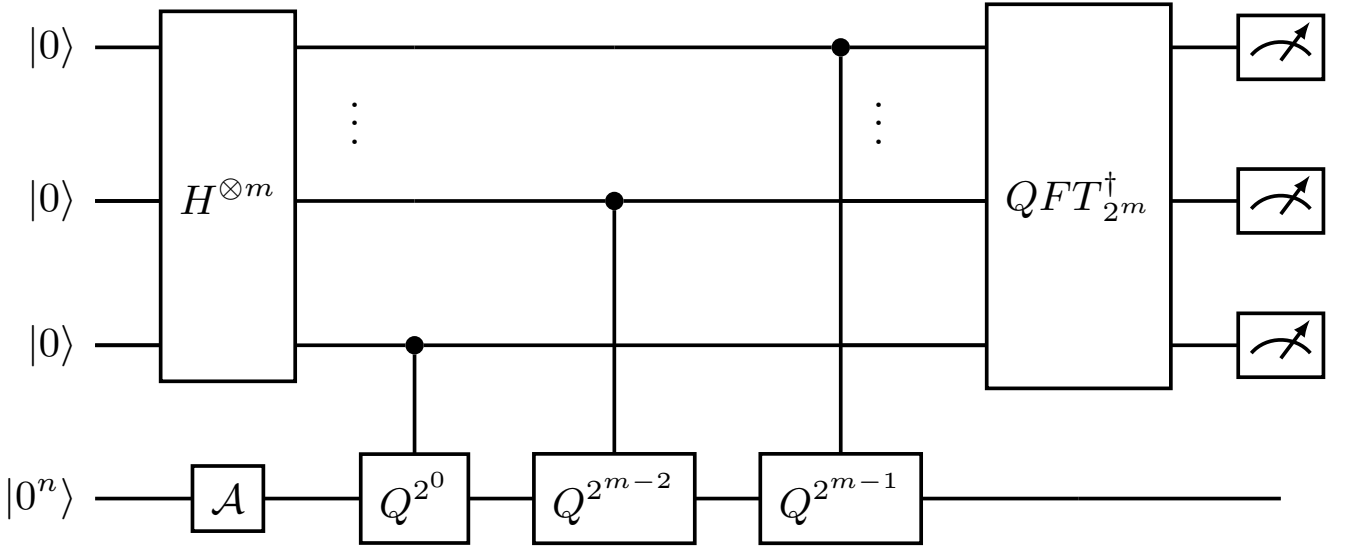
\begin{figure}[h]
		\centering
		\begin{adjustbox}{width=1\textwidth}
\begin{quantikz}
\lstick{$\ket{0}$} &\gate[3]{{H}^{\otimes m}}    & \qw & \qw  &\ctrl{3}&\gate[3]{{QFT}^{\dagger}_{2^m}}& \meter{}\\
\lstick{$\ket{0}$}& \qw   & \qw& \ctrl{2} & \qw& \qw& \meter{}\\
\lstick{$\ket{0}$}& \qw  & \ctrl{1} & \qw& \qw  & \qw&\meter{}\\
\lstick{$\ket{0^{n}}$}& \gate[1]{{\mathcal{A}}}  & \gate[1]{Q^{2^0}} & \gate[1]{Q^{2^{m-2}}} & \gate[1]{Q^{2^{m-1}}}   
& \qw& \qw
\end{quantikz}
\begin{tikzpicture}[overlay, remember picture]
  \node at (-7.5,1.25) {$\vdots$};
   \node at (-3.5,1.25) {$\vdots$};% 可调节位置
\end{tikzpicture}
	\end{adjustbox}
\caption{The circuit  of  quantum counting algorithm in Ref.\cite{brassard2000quantum}.}
		\label{QC algorithm}
\end{figure}

\begin{table}[h] 
	\centering
	\caption{Comparisons of our algorithm with quantum counting  algorithm in Ref.\cite{brassard2000quantum}.}
	\begin{tabular}{*{4}{c}}
		\toprule
		          Algorithms      & \makecell {Number \\of qubits} &\makecell{ The number of\\ quantum gates} &\makecell{Maximum \\depth of  Q} \\
		\hline
		 \makecell{Quantum counting\\ algorithm\cite{brassard2000quantum} }         &      $2n+1$ &\quad$\frac{n^{2}+7n+4}{2}+2^{n+2}  (4^{n+1}-2^{n+2}+1)$   &$2^{n}$    \\		  {Our algorithm\ref{diqc} }      &$n-k+2$ & {\quad$< (2^{2n-2k+5}-2^{n-k+3})(3\cdot2^{n-3}\pi+\frac{1}{2})$}    &$<3\cdot2^{n-3}\pi-\frac{1}{2}$\\
		\toprule 
	\end{tabular}\label{tabel1}
\end{table}
\section{Application}\label{4}
We apply the DIQC algorithm to estimate the inner product and Hamming distance of  two bit strings, which can be used to solve problems in machine learning, such as the least-square fitting\cite{Hao2023}. 

Consider the inner product problem: Alice  and Bob have $2^n$-dimensional vectors  $x, y\in\{0,1\}^{2^n}$. The task is to compute the inner product $x\cdot y=\frac{1}{2^n}\sum_{i=1}^{2^n}x_ {i}y_{i}$ within an error $\epsilon\le0.01$. Assume that both Alice and Bob each have their own oracles to  quantum computers.  Alice's oracle is $U_x: \ket{i}_n\ket{0}\rightarrow\ket{i}_n\ket{x_ {i}}$, querying data $x_ {i}$,
 and Bob's oracle is $U_y: \ket{i}_n\ket{0}\ket{0}\rightarrow\ket{i}_n\ket{0}\ket{y_ {i}}$, querying data $y_ {i}$. To apply DIQC algorithm, it is essential to encode the problem into an initial state in the form of Eq. (\ref{a22}).
 
 Given $n$, let $n=(n-k)+k$, where $k$ is a positive integer.
 Accordingly, we decompose the oracles of Alice and Bob as follows: \begin{equation}\label{a23}
U_{x,j'}: \ket{i'}_{n-k}\ket{0}\rightarrow\ket{i'}_{n-k}\ket{x_ {g_{j' }(i')}}, \quad U_{y,j'}: \ket{i'}_{n-k}\ket{0}\ket{0}\rightarrow\ket{i'}_{n-k}\ket{0}\ket{y_ {g_{j' }(i')}},\end{equation} where $g_{j' }(i')=2^{k}i'+j'$ for any $i'\in\{0,1,\cdots,2^{n-k}-1\}$ and $j'\in\{1,\cdots,2^{k}\}$.
Use the communication between Alice and Bob to cooperatively implement  $\mathcal{A}_{j'}$. To implement $ \mathcal{A}_{j'}$, first Alice  applies the Hadamard gate to preparing an $n-k$-qubit uniform superposition state, and then she applies $U_{x,j'}$ to obtain $$U_{x,j'}\frac{1}{\sqrt{2^{n-k}}} \sum_{i'=0}^{2^{n-k}-1}\ket{i'}_{n-k}\ket{0}=\frac{1}{\sqrt{2^{n-k}}} \sum_{i'=0}^{2^{n-k}-1}\ket{i'}_{n-k}\ket{x_ {g_{j' }(i')}}.$$ Alice sends $n-k+1$ qubits to Bob.  After Bob receives the quantum state, he adds a register and applies a unitary operator $(U_{y,j'}\otimes \mathcal{I})(\mathcal{I}_{n-k}\otimes CCNOT)(U_{y,j'}\otimes \mathcal{I})$ to obtain \begin{align}
&(U_{y,j'}\otimes \mathcal{I})(\mathcal{I}_{n-k}\otimes CCNOT)(U_{y,j'}\otimes \mathcal{I})\left[\frac{1}{\sqrt{2^{n-k}}} \sum_{i'=0}^{2^{n-k}-1}\ket{i'}_{n-k}\ket{x_ {g_{j' }(i')}}\ket{0}\ket{0}\right]\\
=&(U_{y,j'}\otimes \mathcal{I})(I_{n-k}\otimes CCNOT)\frac{1}{\sqrt{2^{n-k}}} \sum_{i'=0}^{2^{n-k}-1}\ket{i'}_{n-k}\ket{x_ {g_{j' }(i')}}\ket{y_ {g_{j'}(i')}}\ket{0}\\
=&(U_{y,j'}\otimes \mathcal{I})\frac{1}{\sqrt{2^{n-k}}} \sum_{i'=0}^{2^{n-k}-1}\ket{i'}_{n-k}\ket{x_ {g_{j' }(i')}}\ket{y_ {g_{j'}(i')}}\ket{x_ {g_{j' }(i')}\land y_ {g_{j'}(i')}}\\
=&\frac{1}{\sqrt{2^{n-k}}} \sum_{i'=0}^{2^{n-k}-1}\ket{i'}_{n-k}\ket{x_ {g_{j' }(i')}}\ket{0}\ket{x_ {g_{j' }(i')}\land y_ {g_{j'}(i')}}.
\end{align}
%where Toffoli gate acts on the last three qubits. 
After decoupling, we discard the third register's qubit $\ket{0}$. Then Bob sends $$\frac{1}{\sqrt{2^{n-k}}} \sum_{i'=0}^{2^{n-k}-1}\ket{i'}_{n-k}\ket{x_ {g_{j' }(i')}}\ket{x_ {g_{j' }(i')}\land y_ {g_{j'}(i')}}$$ to Alice, and the subsequent operations are carried out independently by Alice.  The number of qubits required for communication after executing one instance of $\mathcal{A}_{j'}$ is $2n-2k+3$. Alice first performs a decoupling operation by applying $U_{x,j'}\otimes \mathcal{I}$ to obtain  \begin{equation}
U_{x,j'}\otimes \mathcal{I}\left[\frac{1}{\sqrt{2^{n-k}}} \sum_{i'=0}^{2^{n-k}-1}\ket{i'}_{n-k}\ket{x_ {g_{j' }(i')}}\ket{x_ {g_{j' }(i')}\land y_ {g_{j'}(i')}}\right]=\frac{1}{\sqrt{2^{n-k}}} \sum_{i'=0}^{2^{n-k}-1}\ket{i'}_{n-k}\ket{0}\ket{x_ {g_{j' }(i')}\land y_ {g_{j'}(i')}}. 
\end{equation}
Then Alice applies $\mathcal{I}_{n-k}\otimes\mathcal{R}_{r_i}\otimes \mathcal{I}$ to obtaining
\begin{equation*}\label{a20}
\begin{aligned}
    &(\mathcal{I}_{n-k}\otimes\mathcal{R}_{r_i}\otimes \mathcal{I})(U_{x,j'}\otimes \mathcal{I})\left[\frac{1}{\sqrt{2^{n-k}}} \sum_{i'=0}^{2^{n-k}-1}\ket{i'}_{n-k}\ket{x_ {g_{j' }(i')}}\ket{x_ {g_{j' }(i')}\land y_ {g_{j'}(i')}}\right] \\ = &\sqrt {\frac{r_ic_{j'}}{2^{n-k}}}\sum_{i'\in A_{j'}}\frac{\ket{i'}_{n-k}}{\sqrt {c_{j'}}}\ket{11}  + \sqrt {\frac{c_{j'}\left(1-r_i\right)}{2^{n-k}}}\sum_{i'\in A_{j'}}\frac{\ket{i'}_{n-k}}{\sqrt {c_{j'}}} \ket{10}\\
&+\sqrt {\frac{r_i(2^{n-k}-c_{j'})}{2^{n-k}}}\sum_{i'\notin B_{j'}}\frac{\ket{i'}_{n-k}}{\sqrt {2^{n-k}-c_{j'}}}\ket{01}+\sqrt {\frac{\left(2^{n-k}-c_{j'}\right)\left(1-r_i\right)}{2^{n-k}}}\sum_{i'\notin B_{j'}}\frac{\ket{i'}_{n-k}}{\sqrt {2^{n-k}-c_{j'}}} \ket{00},
\end{aligned}
\end{equation*}
where $$A_{j'}=\{i'|x_ {g_{j' }(i')}y_ {g_{j'}(i')}=1\}, B_{j'}=\{i'|x_ {g_{j' }(i')}y_ {g_{j'}(i')}=0\}, c_{j'}=\left| A_{j'}\right|.$$ 
We denote 
\begin{equation*}
\begin{aligned}
\ket{\phi_0}=&\sum_{i'\in A_{j'}}\frac{\ket{i'}_{n-k}}{\sqrt {c_{j'}}}\ket{11},\\\ket{\phi_1}=&\sqrt {\frac{2^{n-k}}{2^{n-k}-c_{j'}r_i}}\left(\sqrt {\frac{1-r_i}{2^{n-k}}}\sum_{i'\in A_{j'}}\ket{i'}_{n-k} \ket{10}+\sqrt {\frac{r_i}{2^{n-k}}}\sum_{i'\notin B_{j'}}\ket{i'}_{n-k}\ket{01}+\sqrt {\frac{1-r_i}{2^{n-k}}}\sum_{i'\notin B_{j'}}\ket{i'}_{n-k} \ket{00}\right).
\end{aligned}
\end{equation*} Let  $\sin\widetilde{\beta_j}=\sqrt {\frac{r_ic_{j'}}{2^{n-k}}}.$  Then we can obtain 
\begin{equation}\label{a21}
\sin\widetilde{\beta_j}\ket{\phi_0}+\cos\widetilde{\beta_j}\ket{\phi_1}.
\end{equation}

Let $\mathcal{A}_{j'}$  represent this series of operators, with $$\mathcal{A}_{j'}\ket{0}_{n-k+2}=\sin\widetilde{\beta_j}\ket{\phi_0}+\cos\widetilde{\beta_j}\ket{\phi_1}.$$
%=(\mathcal{I}_{n-k}\otimes\mathcal{R}_{r_i}\otimes \mathcal{I}_2)(U_{x,j'}\otimes \mathcal{I}_2)(U_{y,j'}\otimes \mathcal{I})(\mathcal{I}_{n-k}\otimes CCNOT)(U_{y,j'}\otimes \mathcal{I})(U_{x,j'}\otimes \mathcal{I}_2)( \mathcal{H}^{\otimes n-k}\otimes \mathcal{I}_3).$$ 
The operator $G_ {j'}$ is defined as follows:
\begin{equation}
G_ {j'}= -\mathcal{A}_{j'}\left(\mathcal{I}_{n-k+2}-2 \ket{0}_{n-k+2}\bra{0}_{n-k+2} \right)\mathcal{A}^{\dagger}_{j'} \left(\mathcal{I}_{n-k+2}-2\left(\mathcal{I}_{n-k} \otimes\ket{11}\bra{11} \right)\right),~ j\in\{1,\cdots,2^{k}\}. 
\end{equation}
The initial state  is  Eq. (\ref{a21}).

 Given $0<\epsilon'\le0.01$ and $0<\alpha'<\frac{3}{4}$,  we can set $\epsilon'_j=\frac{\epsilon'}{2^k}$, $\alpha'_j=\frac{\alpha'}{2^k}$. Without considering rounding, $2^k$ computers each  run DIQC algorithm to output $c'_{j'}\in R$ with a probability of $1- \frac{4}{3}\alpha'_j$, satisfying $\lvert c'_{j'}- 2^{n-k}a_j\rvert\le2^{n-k-1}\epsilon'_j$. 
The $2^k$ results are then transmitted to a classical computer via classical communication.  Using Theorem \ref{r2}, we have the following result. 
\begin {corollary}\label{h1}
 Given a confidence level $1-\frac{4}{3}\alpha'\in (0, 1)$ and target accuracy $\epsilon'\in (0, 0.01]$,  an estimate $c=\frac{\sum_{j'=1}^{2^k}c'_{j'}}{2^n}$ of $x\cdot y$  can be obtained with  
  \begin{equation*}
\mathbb{P}\left[|c- x\cdot y| \leq 2^{-k-1}3\epsilon'\right]>1-\frac{4}{3}\alpha', 
\end{equation*}
 where  $c'_{j'}$  is the output of the DIQC algorithm without considering rounding.
\end {corollary}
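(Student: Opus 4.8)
The plan is to reduce Corollary \ref{h1} directly to Theorem \ref{r2}, treating the inner-product computation as an instance of the distributed counting problem. First I would observe that the quantity to be estimated, $x\cdot y = \frac{1}{2^n}\sum_{i=1}^{2^n} x_i y_i$, is exactly $\frac{1}{2^n}$ times the number of indices $i$ with $x_i y_i = 1$. Under the decomposition $i = g_{j'}(i') = 2^k i' + j'$, the set of such indices partitions into the $2^k$ sets $A_{j'} = \{i' : x_{g_{j'}(i')} y_{g_{j'}(i')} = 1\}$, and $c_{j'} = |A_{j'}|$ plays precisely the role of $t_j$ in Theorem \ref{r1}, with the amplitude identity $\sin^2\widetilde{\beta_j} = \frac{r_i c_{j'}}{2^{n-k}}$ matching Eq. (\ref{a22}). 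Thus the operators $G_{j'}$ and the state (\ref{a21}) satisfy the same structural hypotheses that $Q_j$ and $\mathcal{A}_j\ket{0}_{n-k+2}$ do, so each node running DIQC on $G_{j'}$ outputs (ignoring the final integer-rounding step) a real estimate $c'_{j'}$ of $2^{n-k} a_j = c_{j'}$ with $|c'_{j'} - c_{j'}| \le 2^{n-k-1}3\epsilon'_j$ at confidence $1 - \frac{4}{3}\alpha'_j$, which is just the first clause of Theorem \ref{r1} applied with $\epsilon_j \mapsto \epsilon'_j$, $\alpha_j \mapsto \alpha'_j$.

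Next I would aggregate. Summing over the $2^k$ nodes and using independence exactly as in the proof of Theorem \ref{r2}, the event that all $c'_{j'}$ lie in their respective intervals has probability at least $(1-\frac{4}{3}\alpha'_j)^{2^k} \ge 1 - \frac{2^{k+2}}{3}\alpha'_j = 1 - \frac{4}{3}\alpha'$ by the Bernoulli inequality. On that event, $\bigl|\sum_{j'} c'_{j'} - \sum_{j'} c_{j'}\bigr| \le \sum_{j'} 2^{n-k-1}3\epsilon'_j = 2^k \cdot 2^{n-k-1}3\epsilon'_j = 2^{n-1}3\epsilon'_j = 2^{n-1}3\frac{\epsilon'}{2^k}$. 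Since $\sum_{j'} c_{j'} = \sum_{i=1}^{2^n} x_i y_i = 2^n (x\cdot y)$, dividing through by $2^n$ gives $|c - x\cdot y| = \frac{1}{2^n}\bigl|\sum_{j'} c'_{j'} - 2^n(x\cdot y)\bigr| \le \frac{1}{2^n} \cdot 2^{n-1}3\frac{\epsilon'}{2^k} = 2^{-k-1}3\epsilon'$, which is the claimed bound.

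One point needing care is that Corollary \ref{h1} is phrased with the \emph{unrounded} real outputs $c'_{j'}$ rather than the integer estimates $t'_j$, so I should be careful to invoke only the interval statement $\mathbb{P}[2^{n-k}a_j \notin [2^{n-k}c_j^{\min}, 2^{n-k}c_j^{\max}]] < \frac{4}{3}\alpha'_j$ from Theorem \ref{r1} (equivalently, the bound $|c'_{j'} - c_{j'}| \le 2^{n-k-1}3\epsilon'_j$ that holds with confidence $1-\frac{4}{3}\alpha'_j$), and not the additional $\frac{2}{3}$ slack that comes from rounding to $\mathbb{N}$ — this is why the bound here is $2^{-k-1}3\epsilon'$ with no additive $\frac{2^{k+1}}{3}$ term. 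I should also verify that the cooperative Alice–Bob implementation of $\mathcal{A}_{j'}$ described above genuinely produces the state (\ref{a21}) with the correct amplitudes (the ancilla uncomputation steps leaving $\ket{0}$ on the discarded register), so that DIQC's correctness hypotheses are literally met; this is the only place where the reduction is not purely formal, but it follows from the explicit circuit identities already written out.

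I expect the main obstacle to be essentially bookkeeping rather than conceptual: making sure the error parameters $\epsilon'_j = \epsilon'/2^k$, $\alpha'_j = \alpha'/2^k$ are substituted consistently so that the $2^k$-fold sum of per-node errors telescopes to $2^{-k-1}3\epsilon'$ after the $\frac{1}{2^n}$ normalization, and that the confidence degrades by exactly the Bernoulli factor to $1 - \frac{4}{3}\alpha'$. Since Theorem \ref{r2} already packages precisely this aggregation step, the cleanest route is to state that $c_{j'}$ here is the $t_j$ there (up to the harmless relabelling $j' \leftrightarrow j$), apply Theorem \ref{r2} in the rounding-free regime, and then divide by $2^n$; I do not anticipate any genuinely new estimate being required.
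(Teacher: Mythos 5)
Your proposal is correct and follows essentially the same route as the paper: identify $c_{j'}=|A_{j'}|$ with the counts $t_j$, invoke Theorem \ref{r2} in the rounding-free regime (dropping the additive $\frac{2^{k+1}}{3}$ term), and divide by $2^n$ to convert $|2^n c - 2^n x\cdot y|\le 2^{n-k-1}3\epsilon'$ into $|c-x\cdot y|\le 2^{-k-1}3\epsilon'$ at confidence $1-\frac{4}{3}\alpha'$. The only difference is that you unpack the Bernoulli-inequality aggregation inline rather than citing Theorem \ref{r2} as a black box, which changes nothing of substance.
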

\begin {proof}
According to Theorem \ref{r2}, without considering rounding, we have
 \begin{equation*}
\mathbb{P}\left[|2^n\cdot c- 2^n x\cdot y| \leq 2^{n-k-1}3\epsilon'\right]>1-\frac{4}{3}\alpha'. 
\end{equation*}
Since $2^n$ is a positive integer, $|2^n\cdot c- 2^n x\cdot y| \leq 2^{n-k-1}3\epsilon'$ is equivalent to $| c-  x\cdot y| \leq 2^{-k-1}3\epsilon'$. Therefore, combining the two, we have this result.
\end {proof}
The number of qubits required by each computer is $n-k+3$, and the number of queries does not exceed $$\left(16c\left\lfloor {\frac{\pi}{8\epsilon'_j}-\frac{1}{2}} \right\rfloor+8c\right)\left[3\ln4+\frac{9}{4}\ln 3+\frac{7}{2}\ln\frac{1}{{ \alpha'_j }}\right],$$ where  $c=\frac{1}{\sin^2\frac{\pi}{21}\sin^2\frac{8\pi}{21}}$. 
We denote $$M=\left(2c\left\lfloor {\frac{\pi}{8\epsilon'_j}-\frac{1}{2}} \right\rfloor+c\right)\left[3\ln4+\frac{9}{4}\ln 3+\frac{7}{2}\ln\frac{1}{{ \alpha'_j }}\right].$$
and the communication complexity does not exceed $\left(4n-4k+6\right)M.$
 The total communication complexity does not exceed \begin{equation} 2^k\left(4n-4k+6\right)M\in O\left(\frac{n}{\epsilon'_j}\ln \frac{1}{\alpha'_j}\right).\end{equation}
 
 The algorithm in Ref.\cite{Hao2023} estimates the error of $x\cdot y$ with a probability of $\frac{8}{\pi^2}$ as $\frac{1}{2^t}$. Let $t = n + 1$.  We can obtain $\epsilon'=\frac{1}{2^{n + 1}}$. To achieve the same effect with our algorithm, we only need to set $\alpha' = \frac{3}{4}-\frac{6}{\pi^2}$ and $\epsilon'=\frac{1}{3\cdot2^{n -k }}$. From Corollary \ref{h1}, we can see that our estimated error reaches $\frac{1}{2^{n + 1}}$. Other related parameters in the algorithm can be adjusted according to the existing device conditions, especially the value of $k$. 
 Using the conclusion from Ref.\cite{mottonen2004quantum}, when a multi-qubit gate acts on an n-qubit state, the gate can be decomposed, requiring $4^n$ single-qubit gates and $4^n-2^{n+1}$ controlled-NOT gates. Therefore, for a controlled $Q$ operator, the total number of quantum gates required is $4^{n+2}+4^{n+2}-2^{n+3}$. The number of quantum gates required to implement the $Q_j$ operator is $2^{n+3}(2^{n+2}-2^{n+2-2k}+2^{-k}-1)$ fewer than that needed for the controlled $Q$. Since $k_{max}=\left\lfloor {\frac{\pi}{8\epsilon'_j}-\frac{1}{2}} \right\rfloor$ and $\epsilon'_j=\frac{1}{3\cdot2^{n }}$, we have $k_{max}\le {{3\cdot2^{n -3 }\pi}-\frac{1}{2}}$. Therefore, the circuit depth of our algorithm is $$d(\mathcal{A}_{j'})+k_{max}d(Q_j)<\left({{3\cdot2^{n -3 }\pi}+\frac{1}{2}}\right)d(Q_j).$$
 The circuit depth of the algorithm in Ref.\cite{Hao2023} is $$1+(2^{n+1}-1)d(Q)+d(QFT_{2^{n+1}}).$$
  With an estimation error of $\frac{1}{{2^{n+1}}}$, we present the following remark.

\begin {remark}
For a single quantum computer, compared to the quantum algorithm for computing inner products in Ref.\cite{Hao2023}, our algorithm reduces the number of qubits by $n + k-1$, decreases the circuit depth by at least $\left(2^{n +1 }-3\cdot2^{n -3 }\pi-\frac{3}{2}\right)d(Q_j) +d(QFT_{2^{n+1}})$, but increases the communication by $O(\ln \frac{1}{\alpha'_j})$.
\end {remark}

Next, we consider using DIQC algorithm to calculate the Hamming distance between two vectors. 

For any two given vectors $x=x_1\cdots x_{2^n}\in\{0,1\}^{2^n}$, $y=y_1\cdots y_{2^n}\in\{0,1\}^{2^n}$, the Hamming distance is defined as the number of positions at which the corresponding bits are different, i.e., $d=\left| \{i|x_i\neq y_i, 1\le i\le 2^n\}\right|$. %The key to calculating the Hamming distance between two bit strings is to construct an Oracle that satisfies $$O_{j'}: \ket{i'}_{n-k}\ket{00}\rightarrow(-1)^{x_ {g_{j' }(i')}\oplus y_ {g_{j' }(i')}}\ket{i'}_{n-k}\ket{00}$$for any $j'\in\{1,\cdots,2^{k}\}$.
We continue to use the unitary operator $U_{x,j'}$ and $U_{y,j'}$ defined in Eq. (\ref{a23}). Let $$A_{j'}=(\mathcal{I}_{n-k}\otimes\mathcal{R}_{r_i})U_{y,j'}(\mathcal{I}_{n-k}\otimes CNOT)U_{y,j'}(U_{x,j'}\otimes \mathcal{I})( \mathcal{H}^{\otimes n-k}\otimes \mathcal{I}_2),$$ where $CNOT$ indicates that the rightmost qubit is  the control qubit and the second rightmost qubit is the target qubit. Then we have
\begin{align}
&(\mathcal{I}_{n-k}\otimes\mathcal{R}_{r_i})U_{y,j'}(\mathcal{I}_{n-k}\otimes CNOT)U_{y,j'}(U_{x,j'}\otimes \mathcal{I})\left(\frac{1}{\sqrt{2^{n-k}}} \sum_{i'=0}^{2^{n-k}-1}\ket{i'}_{n-k}\ket{00}\right)\\
=&(\mathcal{I}_{n-k}\otimes\mathcal{R}_{r_i})U_{y,j'}(\mathcal{I}_{n-k}\otimes CNOT)\frac{1}{\sqrt{2^{n-k}}} \sum_{i'=0}^{2^{n-k}-1}\ket{i'}_{n-k}\ket{x_ {g_{j' }(i')}}\ket{y_ {g_{j'}(i')}}\\
=&(\mathcal{I}_{n-k}\otimes\mathcal{R}_{r_i})U_{y,j'}\frac{1}{\sqrt{2^{n-k}}} \sum_{i'=0}^{2^{n-k}-1}\ket{i'}_{n-k}\ket{x_ {g_{j' }(i')}\oplus y_ {g_{j' }(i')}}\ket{y_ {g_{j'}(i')}}\\
=&(\mathcal{I}_{n-k}\otimes\mathcal{R}_{r_i})\frac{1}{\sqrt{2^{n-k}}} \sum_{i'=0}^{2^{n-k}-1}\ket{i'}_{n-k}\ket{x_ {g_{j' }(i')}\oplus y_ {g_{j' }(i')}}\ket{0}\\
=&\sqrt {\frac{r_ib_{j'}}{2^{n-k}}}\sum_{i'\in H_{j'}}\frac{\ket{i'}_{n-k}}{\sqrt {b_{j'}}}\ket{11}  + \sqrt {\frac{b_{j'}\left(1-r_i\right)}{2^{n-k}}}\sum_{i'\in H_{j'}}\frac{\ket{i'}_{n-k}}{\sqrt {b_{j'}}} \ket{10}\\
&+\sqrt {\frac{r_i(2^{n-k}-b_{j'})}{2^{n-k}}}\sum_{i'\notin D_{j'}}\frac{\ket{i'}_{n-k}}{\sqrt {2^{n-k}-b_{j'}}}\ket{01}+\sqrt {\frac{\left(2^{n-k}-b_{j'}\right)\left(1-r_i\right)}{2^{n-k}}}\sum_{i'\notin D_{j'}}\frac{\ket{i'}_{n-k}}{\sqrt {2^{n-k}-b_{j'}}} \ket{00},
\end{align}
where $$H_{j'}=\{i'|x_ {g_{j' }(i')}\oplus y_ {g_{j' }(i')}=1\}, D_{j'}=\{i'|x_ {g_{j' }(i')}\oplus y_ {g_{j' }(i')}=0\},b _{j'}=\left| H_{j'}\right|.$$

Similarly, for each $j'\in\{1,\cdots,2^{k}\}$, we need to utilize quantum communication to implement $A_{j'}$. First, Alice needs to send $$\frac{1}{\sqrt{2^{n-k}}} \sum_{i'=0}^{2^{n-k}-1}\ket{i'}_{n-k}\ket{x_ {g_{j' }(i')}}$$ to Bob, and the remaining operations are completed independently by Bob. Therefore, implementing one instance of $A_{j'}$ requires $n-k+1$ qubits of communication. Each computer requires at most $n-k+2$ qubits.

Similarly, given $0<\epsilon''\le0.01$ and $0<\alpha''<\frac{3}{4}$,  we can set $\epsilon''_j=\frac{\epsilon''}{2^k}$, $\alpha''_j=\frac{\alpha''}{2^k}$. If it is not required for the algorithm to output an integer, then the following conclusion can be drawn from Theorem \ref{r2}.
 \begin {corollary}\label{h2}
 Given a confidence level $1-\frac{4}{3}\alpha''\in (0, 1)$ and target accuracy $\epsilon''\in (0, 0.01]$,  an estimate $b=\frac{\sum_{j'=1}^{2^k}b_{j'}}{2^n}$ of $\frac{1}{2^n} \sum_{i} a_i \oplus b_i
$  can be obtained with  
  \begin{equation*}
\mathbb{P}\left[|c- \frac{d}{2^n}| \leq 2^{-k-1}3\epsilon'\right]>1-\frac{4}{3}\alpha'', 
\end{equation*}
 where  $b_{j'}$  is the output of the DIQC algorithm without considering rounding.
\end {corollary}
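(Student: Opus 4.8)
The plan is to mirror the proof of Corollary~\ref{h1} essentially verbatim, since the Hamming-distance construction differs from the inner-product one only in that the Boolean combiner $\land$ (realized by the $CCNOT$ sandwich on Bob's side) is replaced by $\oplus$ (realized by the $CNOT$ sandwich), while the amplitude bookkeeping is unchanged. Concretely, the chain of equalities preceding the statement already exhibits $A_{j'}\ket{0}_{n-k+2} = \sin\widetilde{\beta_j}\ket{\phi_0} + \cos\widetilde{\beta_j}\ket{\phi_1}$ with $\sin^2\widetilde{\beta_j} = r_i b_{j'}/2^{n-k}$, which is exactly the generic DIQC form of Eq.~(\ref{a22}); hence measuring $\ket{11}$ on the last two qubits after applying the appropriate Grover iterate to $A_{j'}\ket{0}$ estimates $b_{j'}/2^{n-k}$, and each of the $2^k$ nodes is running a legitimate instance of DIQC with marked-element count $b_{j'} = |H_{j'}|$.

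First I would record the partition identity $\sum_{j'=1}^{2^k} b_{j'} = d$: because the maps $g_{j'}(i') = 2^k i' + j'$, $j'\in\{1,\dots,2^k\}$, partition $\{1,\dots,2^n\}$, the sets $\{g_{j'}(i') : i'\in H_{j'}\}$ of differing positions are pairwise disjoint and their union is exactly $\{i : x_i\neq y_i\}$, whose cardinality is $d = \sum_i (x_i\oplus y_i)$. I would then invoke Theorem~\ref{r2} with the substitutions $a_j\leftarrow b_{j'}/2^{n-k}$, $\epsilon\leftarrow\epsilon''$, $\alpha\leftarrow\alpha''$, $\epsilon_j\leftarrow\epsilon''_j = \epsilon''/2^k$, $\alpha_j\leftarrow\alpha''_j = \alpha''/2^k$: in the ``no rounding'' variant each node returns $b_{j'}$ with $|b_{j'} - 2^{n-k}a_j| \le 2^{n-k-1}3\epsilon''_j$ at confidence $1-\tfrac{4}{3}\alpha''_j$, the nodes act on disjoint data with independent randomness, and $\sum_{j'} 2^{n-k} a_j = d$, so Theorem~\ref{r2} yields
\[
\mathbb{P}\!\left[\,\left|\,\sum_{j'=1}^{2^k} b_{j'} - d\,\right| \le 2^{n-k-1}3\epsilon''\,\right] > 1 - \tfrac{4}{3}\alpha''.
\]
Writing $b = 2^{-n}\sum_{j'=1}^{2^k} b_{j'}$ and dividing the event inside the probability by the positive integer $2^n$ --- which changes neither the event nor its probability --- gives $|b - d/2^n| \le 2^{-k-1}3\epsilon''$ at the same confidence, which is the assertion (with the $\epsilon'$ in the displayed statement being a typo for $\epsilon''$, and $c$ standing for $b$).

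I do not expect a genuine obstacle. The only point deserving care is the partition identity $\sum_{j'} b_{j'} = d$, since it is precisely what licenses treating the global Hamming distance as the sum of the per-node marked-element counts and hence applying Theorem~\ref{r2} without modification; everything else is a one-line division by $2^n$, and the constants already match those in Corollary~\ref{h1}.
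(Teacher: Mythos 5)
Your proposal is correct and follows the same route the paper intends: the paper gives no separate proof for this corollary, stating only that it ``can be drawn from Theorem~\ref{r2}'' in exact parallel to the proof of Corollary~\ref{h1} (apply Theorem~\ref{r2} to the per-node counts $b_{j'}$ and divide by the positive integer $2^n$). Your explicit check of the partition identity $\sum_{j'} b_{j'}=d$ and your reading of $\epsilon'$, $c$ as typos for $\epsilon''$, $b$ are both sound additions rather than deviations.
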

 The total communication complexity required to compute $d$ with probability $1 - \frac{4}{3}\alpha''$  using $2^k$ computers running DIQC algorithm in parallel does not exceed \begin{equation} 2^k\left(2n-2k+2\right)M'\in O\left(\frac{n}{\epsilon''_j}\ln \frac{1}{\alpha''_j}\right),\end{equation}where $$M'=\left(2c\left\lfloor {\frac{\pi}{8\epsilon''_j}-\frac{1}{2}} \right\rfloor+c\right)\left[3\ln4+\frac{9}{4}\ln 3+\frac{7}{2}\ln\frac{1}{{ \alpha''_j }}\right],$$ $c=\frac{1}{\sin^2\frac{\pi}{21}\sin^2\frac{8\pi}{21}}$.  The algorithm in Ref.\cite{Hao2023} sets the estimation error as $\frac{1}{\sqrt{2^{n}}}$, which is equal to the required estimated number of qubits $t = \frac{n}{2}$. The total number of qubits required by their algorithm is $2n+1$. To achieve an estimation error of $\frac{1}{\sqrt{2^{n}}}$, we set $\epsilon''=\frac{1}{3\cdot\sqrt{2^{n-2k-2}}}$. The maximum allowed $k$ for our algorithm is $3\cdot{\sqrt{{2^{n -8 }}}\pi-\frac{1}{2}}$.
 Furthermore, the maximum circuit depth of our algorithm is $d(A_j)+ ({\sqrt{{2^{n -8 }}\pi}-\frac{1}{2}})d(Q_j)$. The circuit depth of the algorithm in Ref.\cite{Hao2023} is $$1+(2^{\frac{n}{2}}-1)d(Q)+d(QFT_{2^{\frac{n}{2}}}).$$With an estimation error of $\frac{1}{\sqrt{2^{n}}}$, we present the following conclusion.
 \begin {remark}
For a single quantum computer, compared to the quantum algorithm for computing Hamming distance in Ref.\cite{Hao2023}, our algorithm reduces the number of qubits by $\frac{n}{2} + k-1$, decreases the circuit depth by at least $\left(\sqrt{2^n}-3\cdot\sqrt{{2^{n -8 }}}\pi-\frac{3}{2}\right)d(Q_j) +d(QFT_{2^{\frac{n}{2}}})$, but increases the communication by $O(\ln \frac{1}{\alpha''_j})$.
\end {remark}

\section{Numerical experiment}\label{experiment1}
In this section, we conduct simulation experiments using IBM's quantum programming software, Qiskit. Given the set  $X=\{0, 1, \cdots, 2^6-1\}$ , where $S=\{38, 8, 16\}$ represents the marked elements, we  now use the DIQC algorithm to find the number of marked elements.  Let $k=1$. 
$S_0=\{10000, 01000\}$, $S_1=\{00110\}$. Construct $\mathcal{A}_0$ and $\mathcal{A}_1$ as follows:  \begin{equation}
\mathcal{A}_0=\left(\mathcal{I}_{6}\otimes \mathcal{R}_{r_i}\right)\left(U_{\chi_{S_0}}\otimes \mathcal{I}\right)\left(\mathcal{H}^{\otimes {5}}\otimes \mathcal{I}_2\right),
\end{equation}
 \begin{equation}
\mathcal{A}_1=\left(\mathcal{I}_{6}\otimes \mathcal{R}_{r_i}\right)\left(U_{\chi_{S_1}}\otimes \mathcal{I}\right)\left(\mathcal{H}^{\otimes {5}}\otimes \mathcal{I}_2\right),
\end{equation}
where \begin{equation}U_{\chi_{S_j}}\ket{x}_{5}\ket{0}=\ket{x}_{5}\ket{\chi_{S_j}(x)}, x\in\{0,1\}^5, j\in\{0,1\},
\end{equation}
$\chi_{S_j}(x)$ is the indicator function of $S_j$. The circuit  for $\mathcal{A}_0$ and $\mathcal{A}_1$ are shown in Fig.\ref{A0} and Fig.\ref{A1}. When implementing the $\mathcal{R}_{r_i}$ operator, we use the $R_Y(\theta)$ gate, where $\theta=2  \arcsin(\sqrt{r_i})$. The circuit  provided below correspond to $\mathcal{A}_0$ and $\mathcal{A}_1$  at the initial value $r_i=1$.
As the algorithm runs, the parameter $\theta$ of the $R_Y(\theta)$ gate also changes gradually. 
\begin{figure*}[htbp]%参数[h]表示紧跟着文字
		\centering%居中
		\includegraphics[width=0.5\linewidth]{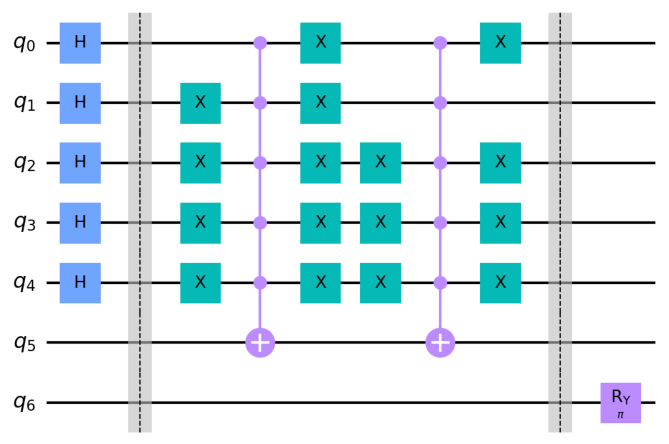}
		\setlength{\abovecaptionskip}{-0.01cm}
		\caption{The circuit for  $\mathcal{A}_0$ in  DIQC algorithm.}
		\label{A0}
	\end{figure*}
\begin{figure*}[htbp]%参数[h]表示紧跟着文字
		\centering%居中
		\includegraphics[width=0.5\linewidth]{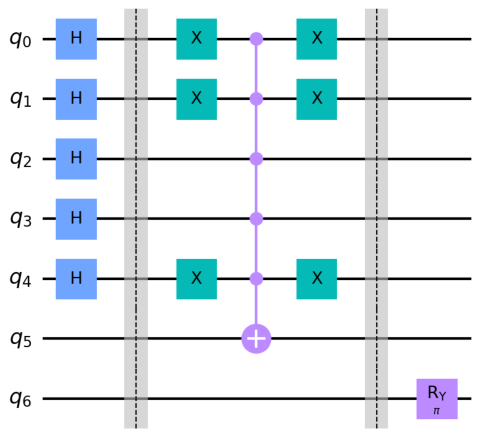}
		\setlength{\abovecaptionskip}{-0.01cm}
		\caption{The circuit for  $\mathcal{A}_1$ in  DIQC algorithm.}
		\label{A1}
	\end{figure*}
The operators $Q_j$(for $j\in\{0,1\}$,$r_i=1$)  are defined as follows:
\begin{equation}
Q_j= -\mathcal{A}_j\left(\mathcal{I}_{7}-2 \ket{0}_{7}\bra{0}_{7} \right)\mathcal{A}^{\dagger}_j \left(\mathcal{I}_{7}-2\left(\mathcal{I}_{5} \otimes\ket{11}\bra{11} \right)\right), ~j\in\{0,1\}.
\end{equation}
The circuit  for $Q_0$ and $Q_1$ are shown in Fig.  \ref{Q0} and Fig. \ref{Q1}.
\begin{figure*}[htbp]%参数[h]表示紧跟着文字
		\centering%居中
		\includegraphics[width=\linewidth]{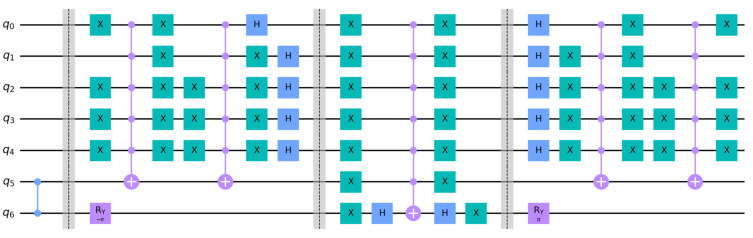}
		\setlength{\abovecaptionskip}{-0.01cm}
		\caption{The circuit for  $Q_0$  in  DIQC algorithm.}
		\label{Q0}
	\end{figure*}
\begin{figure*}[htbp]%参数[h]表示紧跟着文字
		\centering%居中
		\includegraphics[width=\linewidth]{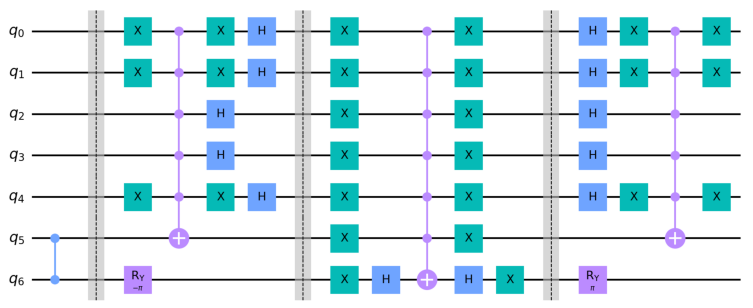}
		\setlength{\abovecaptionskip}{-0.01cm}
		\caption{The circuit for  $Q_1$ in  DIQC algorithm.}
		\label{Q1}
	\end{figure*}

Given $\alpha_j=0.05$, $\epsilon_j =0.001$, $N_0=1$, %$L=50$ and $q_0=2$,
 we execute the algorithm 100 times for $j = 0$ and $j = 1$ respectively, and take the average of these 100 runs as the estimated result.  Without considering rounding, our experimental results are presented in a Table \ref{tabel4}.  By summarizing the results of these two computation nodes, we can conclude that the total number of marked elements is 3.
\begin{table}[h] 
	\centering
	\caption{Given $\alpha_j=0.05$, $\epsilon_j =0.001$, the experimental results of each computation node.}
	\begin{tabular}{*{6}{c}}
		\toprule
		          Computation node     & \makecell {Number \\of qubits} &\makecell{ Query complexity} &\makecell{Maximum \\depth of  Q} &{Estimated results} &{Number of successes}\\
		\hline
		 {$j = 0$ }         & 7 &59656   &83.63   &2.0005 &100 \\		  
		 {$j = 1$ }      &7 & 43305    &62.95  &1.0020 &100\\
		\toprule 
	\end{tabular}\label{tabel4}
\end{table}

The reason we use two amplification mechanisms to adjust $K$ is to further reduce the number of measurements, especially when $K$ is large and the circuit depth is deep. By increasing the significance level, the number of measurements can be reduced. By analyzing the data from any given node $j=0$, we find that in 100 experiments, all $N_{i}^{max}$ values decreased as $k$ increased. However, only one instance is presented, and the results of that instance are shown in Fig. \ref{nk}. 
\begin{figure*}[htbp]%参数[h]表示紧跟着文字
		\centering%居中
		\includegraphics[width=0.6\linewidth]{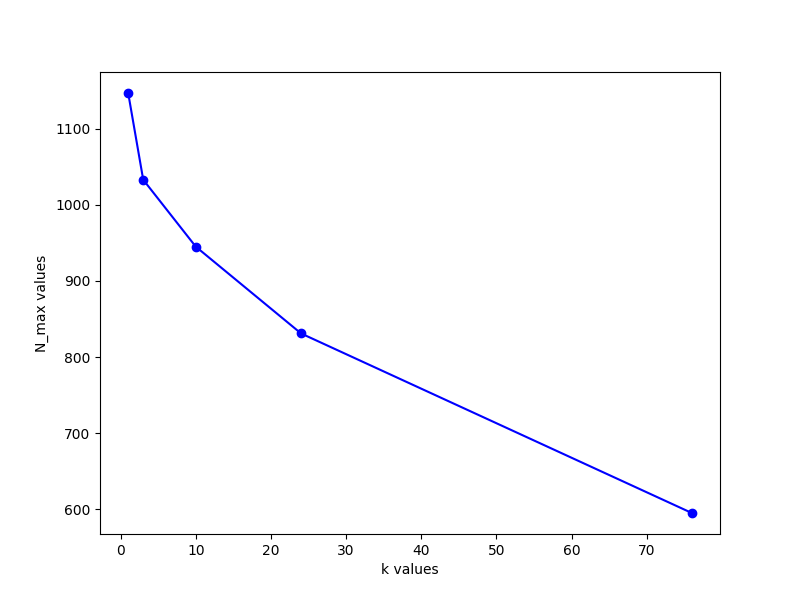}
		\setlength{\abovecaptionskip}{-0.01cm}
		\caption{The relationship between the maximum number of allowable measurements per round and $k$.}
		\label{nk}
	\end{figure*}

 Consider the algorithm performance of a single node. We compare our algorithm with the algorithm (MIQAE) in Ref. \cite{Fukuzawa2023}. Although the MIQAE algorithm is an amplitude estimation algorithm, it can be used for quantum counting. When the number of marked elements divided by the total number of elements is equal to the square of the modulus of the amplitude to be estimated, multiplying the estimation result by the total count gives the number of marked elements. Therefore, without considering the rounding of the estimation result, we take the estimation result from Ref. \cite{Fukuzawa2023} and multiply it by the total count to obtain the final estimation result. Given that the total count is 64, with one marked element, we conduct 100 experiments on the algorithm. We mainly observe the number of successful attempts of the algorithm, the circuit depth, which is the maximum $k$ during actual operation. The experimental results are shown in Fig.\ref{Q10} and Fig.\ref{Q11}. From the figures, it can be seen that our algorithm has a lower circuit depth and a higher success rate. In practice, we find that the circuit depth is much smaller than $k_{max}$.
\begin{figure*}[htbp]%参数[h]表示紧跟着文字
		\centering%居中
		\includegraphics[width=0.6\linewidth]{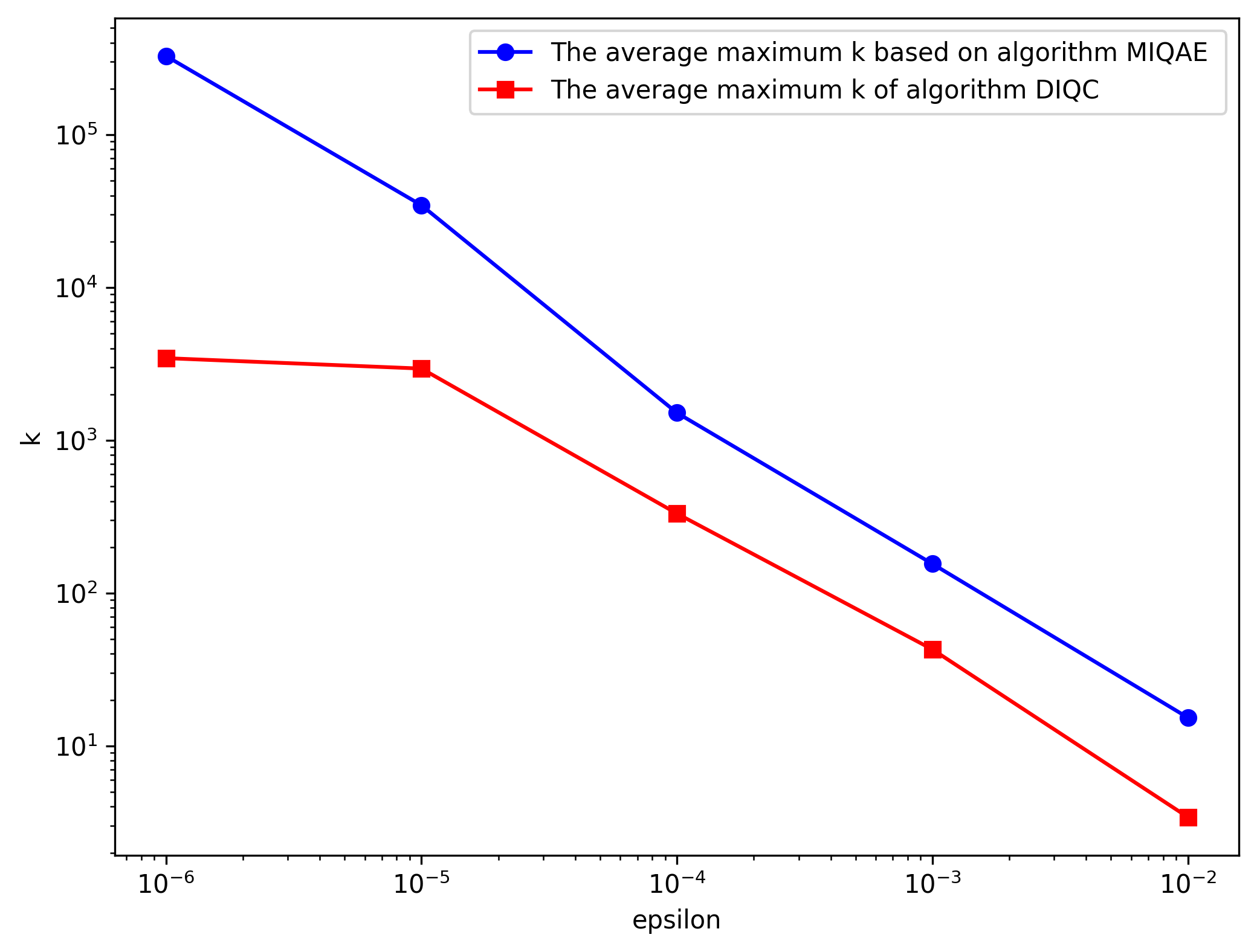}
		\setlength{\abovecaptionskip}{-0.01cm}
		\caption{The relationship between the average maximum circuit depth under successful estimation and $\epsilon$ when running the algorithm 100 times with an input amplitude of $1/64$ and $\alpha=0.05$.
		}
		\label{Q10}
	\end{figure*}
\begin{figure*}[htbp]%参数[h]表示紧跟着文字
		\centering%居中
		\includegraphics[width=0.6\linewidth]{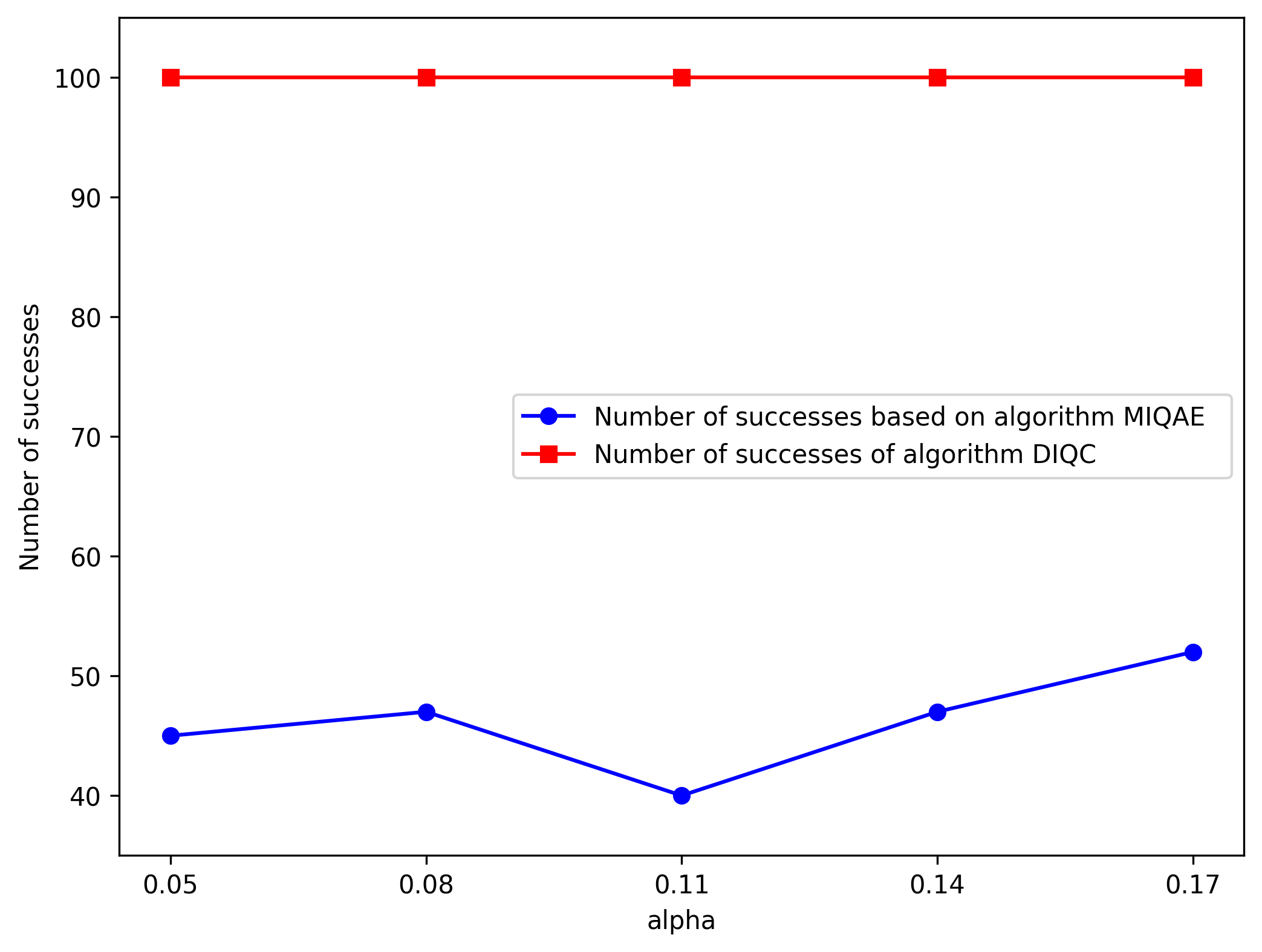}
		\setlength{\abovecaptionskip}{-0.01cm}
		\caption{The relationship between the number of successful estimates and a when running the algorithm 100 times with an input amplitude of $1/64$ and $\epsilon=0.001$.}
		\label{Q11}
	\end{figure*}

\section{Conclusion}\label{6}
  We have proposed a distributed quantum counting algorithm that operates without the need for Fourier transforms or controlled Grover operators, relying solely on Grover operators. The algorithm can be executed in parallel or sequentially. We have proven the correctness of the algorithm and analyzed its query complexity, which is related to the confidence level and the interval width. By using a smaller significance level in the early stages, the algorithm can reduce the number of measurements as the circuit depth increases. Compared to quantum counting algorithms with controlled Grover operators, the proposed algorithm reduces both the number of qubits and the number of quantum gates, as well as decreases the circuit depth. We have further demonstrated the low number of qubits and low number of quantum gates of our algorithm through specific examples. We have applied the proposed algorithm to inner product estimation and Hamming distance calculation. Simulation experiments have been conducted on the Qiskit platform, and the results have further validated the effectiveness  of the algorithm. We have also compared the DIQC algorithm with the MIQAE algorithm, and the experiments have showed that DIQC algorithm has lower circuit depth, higher success probability, and greater robustness. The algorithm may provide the better practical application of quantum computing.  Future research could focus on deriving tighter upper bounds for the query complexity and identifying strategies to further minimize the circuit depth. In addition, it may be worthwhile to consider how to design a dynamic adjustment of the amplification factor in the future to better optimize the algorithm.


\begin{thebibliography}{00}

\bibitem{brassard1998quantum}
G. Brassard, P. Hoyer, and A. Tapp, ``Quantum counting,'' in \emph{Automata, Languages and Programming: 25th International Colloquium, ICALP'98 Aalborg, Denmark, July 13-17, 1998 Proceedings 25}. Springer, 1998, pp. 820--831.

\bibitem{brassard2000quantum}
G. Brassard, P. Hoyer, M. Mosca, and A. Tapp, ``Quantum amplitude amplification and estimation,'' \emph{arXiv preprint quant-ph/0005055}, 2000.

\bibitem{Grover1996}
L. K. Grover, ``A fast quantum mechanical algorithm for database search,'' in \emph{Proceedings of the twenty-eighth annual ACM symposium on Theory of computing}, 1996, pp. 212--219.

\bibitem{kitaev1995quantum}
A. Y. Kitaev, ``Quantum measurements and the abelian stabilizer problem,'' \emph{arXiv preprint quant-ph/9511026}, 1995.

\bibitem{Preskill2018}
J. Preskill, ``Quantum computing in the nise era and beyond,'' \emph{Quantum}, vol. 2, 2018.

\bibitem{suzuki2020amplitude}
Y. Suzuki, S. Uno, R. Raymond, T. Tanaka, T. Onodera, and N. Yamamoto, ``Amplitude estimation without phase estimation,'' \emph{Quantum Inf. Process.}, vol. 19, no. 1, 2020.

\bibitem{grinko2021iterative}
D. Grinko, J. Gacon, C. Zoufal, and S. Woerner, ``Iterative quantum amplitude estimation,'' \emph{npj Quantum Information}, vol. 7, p. 52, 2021.

\bibitem{Fukuzawa2023}
S. Fukuzawa, C. Ho, S. Irani, and J. Zion, ``Modified iterative quantum amplitude estimation is asymptotically optimal,'' in \emph{2023 proceedings of the symposium on algorithm engineering and experiments (ALENEX)}. SIAM, 2023, pp. 135--147.

\bibitem{zhao2022adaptive}
Y. Zhao, H. Wang, K. Xu, Y. Wang, J. Zhu, and F. Wang, ``Adaptive algorithm for quantum amplitude estimation,'' \emph{arXiv preprint arXiv:2206.08449}, 2022.

\bibitem{manzano2023real}
A. Manzano, D. Musso, and A. Leitao, ``Real quantum amplitude estimation,'' \emph{EPJ Quantum Technology}, vol. 10, no. 1, 2023.

\bibitem{plekhanov2022variational}
K. Plekhanov, M. Rosenkranz, M. Fiorentini, and M. Lubasch, ``Variational quantum amplitude estimation,'' \emph{Quantum}, vol. 6, p. 670, 2022.

\bibitem{tan2017generalized}
J. Tan, Y. Ruan, X. Li, and H. Chen, ``Generalized quantum counting algorithm for non-uniform amplitude distribution,'' \emph{Quantum Information Processing}, vol. 16, p. 62, 2017.

\bibitem{wie2019simpler}
C.-R. Wie, ``Simpler quantum counting,'' \emph{arXiv preprint arXiv:1907.08119}, 2019.

\bibitem{aaronson2020quantum}
S. Aaronson and P. Rall, ``Quantum approximate counting, simplified,'' in \emph{Symposium on simplicity in algorithms}. SIAM, 2020, pp. 24--32.

\bibitem{venkateswaran2020quantum}
R. Venkateswaran and R. O'Donnell, ``Quantum approximate counting with nonadaptive grover iterations,'' \emph{arXiv preprint arXiv:2010.04370}, 2020.

\bibitem{gall2022quantum}
F. L. Gall and I.-I. Ng, ``Quantum approximate counting for markov chains and application to collision counting,'' \emph{arXiv preprint arXiv:2204.02552}, 2022.

\bibitem{qiu2025universal}
D. Qiu, L. Xiao, L. Luo, and P. Mateus, ``Universal error correction for distributed quantum computing,'' \emph{arXiv preprint arXiv:2508.15580}, 2025.

\bibitem{Qiu2022}
D. Qiu, L. Luo, and L. Xiao, ``Distributed grover's algorithm,'' \emph{Theor. Comput. Sci.}, vol. 993, 2024.

\bibitem{TAN_quantum_2022}
J. Tan, L. Xiao, D. Qiu, L. Luo, and P. Mateus, ``Distributed quantum algorithm for simon's problem,'' \emph{Phys. Rev. A}, vol. 106, p. 032417, 2022.

\bibitem{Xiao2023}
L. Xiao, D. Qiu, L. Luo, and P. Mateus, ``Distributed shor's algorithm,'' \emph{Quantum Inf. Comput.}, vol. 23, 2022.

\bibitem{hoeffding1994probability}
W. Hoeffding, ``Probability inequalities for sums of bounded random variables,'' \emph{The collected works of Wassily Hoeffding}, p. 409, 1994.

\bibitem{mottonen2004quantum}
M. Mottonen, J. J. Vartiainen, V. Bergholm, and M. M. Salomaa, ``Quantum circuits for general multiqubit gates,'' \emph{Phys. Rev. Lett.}, vol. 93, p. 130502, 2004.

\bibitem{Hao2023}
H. Tang, B. Li, G. Wang, H. Xu, C. Li, A. Barr, P. Cappellaro, and J. Li, ``Communication-efficient quantum algorithm for distributed machine learning,'' \emph{Phys. Rev. Lett.}, vol. 130, p. 150602, 2023.

\end{thebibliography}
\end{document}